\newcommand{\R}{\mathbb{R}}
\newcommand{\tr}{\text{tr}}
\newcommand{\sumion}{\ensuremath{\sum_{i=1}^N}}
\newcommand{\sumio}[2][i]{\ensuremath{\sum_{#1=1}^#2}}
\newcommand{\avgion}[1][i]{\ensuremath{\frac{1}{N}\sum_{#1=1}^N}}
\newcommand{\liston}[2][n]{\ensuremath{#2_{1},\dotsc,#2_{#1}}}
\newcommand{\dom}{\mathcal{X}}
\newcommand{\N}{\mathbb{N}}
\newcommand{\ind}[1]{\mathbbm{1}\{#1\}}
\newcommand{\Ex}[2][ ]{\ensuremath{\mathbb{E}_{#1}\left[#2\right]}}
\newcommand{\Var}[2][ ]{\ensuremath{\text{Var}_{#1}\left(#2\right)}}
\renewcommand{\Pr}[2][ ]{\ensuremath{\mathbb{P}_{#1}\left(#2\right)}}
\newcommand{\Norm}[1]{\ensuremath{\mathcal{N}\left(#1\right)}}
\newcommand{\simiid}{\overset{\textrm{i.i.d.}}{\sim}}
\newcommand{\PCord}{\ensuremath{m}}
\newcommand{\Mmat}{\hat{M}_{\PCord,\sigma_0}}
\newcommand{\initialset}{\ensuremath{\mathcal{X}_0} }
\newcommand{\distset}{\ensuremath{\mathcal{D}} }
\newcommand{\rs}{\ensuremath{R_{[t_0,t_1]}}}
\newcommand{\ars}{\ensuremath{\hat{R}_{[t_0,t_1]}}}
\newcommand{\cclass}{\mathcal{C}}
\newcommand{\risk}[1][c]{\ensuremath{r(#1)}}
\newcommand{\erisk}[1][c]{\ensuremath{\hat{r}(#1)}}
\newcommand{\srisk}[1][Q]{\ensuremath{r_{#1}}}
\newcommand{\esrisk}[1][Q]{\ensuremath{\hat{r}_{#1}}}
\DeclarePairedDelimiter{\ceil}{\lceil}{\rceil}
\newtheorem{problem}{Problem}
\crefname{hypothesis}{Hypothesis}{Hypotheses}
\title{Data-Driven Reachability analysis and Support set Estimation with Christoffel Functions%
    \thanks{Submitted to the editors December 17, 2021. This paper is a revision and extension of a conference paper~\cite{devonport2021datadriven}.
\funding{This work is funded in part by the Air Force Office of Scientific
Research grant FA9550-21-1-0288, National Science Foundation grant ECCS-1906164,
and the Office of Naval Research grant N00014-18-1-2209.}
}}
\author{Alex Devonport 
    \thanks{University of California, Berkeley,
    Berkeley, CA\\ (\email{\{alex\_devonport,forestyang,elghaoui,arcak\}@berkeley.edu})}
    \and Forest Yang\footnotemark[2]
    \and Laurent El Ghaoui\footnotemark[2] 
    \and Murat Arcak\footnotemark[2]
}
\newcommand*{\addFileDependency}[1]{%
  \typeout{(#1)}%
  \@addtofilelist{#1}%
  \IfFileExists{#1}{}{\typeout{No file #1.}}%
}
\begin{document}

\maketitle

\begin{abstract}
We present algorithms for estimating the forward reachable set of a dynamical
system using only a finite collection of independent and identically distributed
samples. The produced estimate is the sublevel set of a function called an
empirical inverse Christoffel function: empirical inverse Christoffel functions
are known to provide good approximations to the support of probability
distributions. In addition to reachability analysis, the same approach can be
applied to general problems of estimating the support of a random variable,
which has applications in data science towards detection of novelties and
outliers in data sets. In applications where safety is a concern, having a
guarantee of accuracy that holds on finite data sets is critical. In this paper,
we prove such bounds for our algorithms under the Probably Approximately Correct
(PAC) framework. In addition to applying classical Vapnik-Chervonenkis (VC)
dimension bound arguments, we apply the PAC-Bayes theorem by leveraging a formal
connection between kernelized empirical inverse Christoffel functions and
Gaussian process regression models. The bound based on PAC-Bayes applies to a
more general class of Christoffel functions than the VC dimension argument, and
achieves greater sample efficiency in experiments.
\end{abstract}

\begin{keywords}
  Data-driven control, PAC, PAC-Bayes, Christoffel functions
\end{keywords}

\begin{AMS}
  93E10,
\end{AMS}

\section{Introduction}

Reachability analysis is a popular and effective way to guarantee the safety of a system
in the face of uncertainty. The primary object of study is
the reachable set, which characterizes all possible evolutions of a system under
certain constraints on initial conditions and disturbances.
Many algorithms in reachability analysis use detailed
system information to compute a sound approximation to the reachable set, that
is an approximation guaranteed to completely contain (or be contained in) the
reachable set. However,
in many important applications, such as 
complex
cyber-physical systems that
are only accessible through simulations or experiments, this detailed
system information is not available, so these algorithms cannot be applied.
Applications such as these motivate \emph{data-driven} reachability analysis,
which studies algorithms to estimate reachable sets using the type of data that
can be obtained from experiments and simulations. These algorithms have the
advantage of being able to estimate the reachable sets of any system whose
behavior can be simulated or measured experimentally, without requiring any
additional mathematical information about the system. The main disadvantage of
data-driven reachability algorithms is that generally they cannot provide the
same type of soundness guarantees as traditional reachability analysis
algorithms; however, they can still guarantee accuracy of the estimates in a
probabilistic sense with high confidence, as this article will show.

Data-driven reachability is a rapidly growing area of research within
reachability analysis. Many recent developments focus either on providing probabilistic
guarantees of correctness for data-driven methods that estimate the reachable
set directly from data,
for instance using results from statistical learning theory~\cite{devonport2020data} or scenario
optimization~\cite{marseglia2014hybrid,yang2016multi,ioli2017smart,sartipizadeh2019voronoi,hewing2019scenario,devonport2020estimating}. 
Others incorporate data-driven elements into more traditional reachability
approaches, for instance estimating entities such as discrepancy
functions~\cite{fan2017dryvr} or
differential inclusions~\cite{djeumou2020fly}. Further developments include incorporating
data-driven reachability into verification tools for
cyber-physical systems~\cite{fan2017dryvr, qi2018dryvr}.

This paper investigates a data-driven reachability algorithm that directly
estimates the reachable set from data using the sublevel sets of an empirical
inverse Christoffel function, and provides a probabilistic guarantee of accuracy
for the method using statistical learning-theoretic methods. Christoffel
functions are a class of polynomials defined with respect to measures on $\R^n$:
a single measure defines a family of Christoffel function polynomials.
When the measure in question is defined by a probability distribution on $\R^n$
the level sets of Christoffel functions are known empirically to
provide tight
approximations to the support.
This support-approximating quality has motivated the use of Christoffel
functions in several statistical applications, such as density
estimation~\cite{lasserre2017empirical, lasserre2019empirical} and outlier
detection~\cite{askari2018kernel}. Additionally, the level sets have been
shown, using the plug-in approach~\cite{cuevas1997plug}, to converge exactly to
the support of the distribution (in the sense of
Hausdorff measure) when the degree of the polynomial approaches infinity and
when the true probability distribution is available~\cite{lasserre2019empirical}.
When the true probability distribution is \emph{not} known, as is typically the case in
data analysis, the Christoffel function can be empirically estimated using a
point cloud of independent and identically distributed (iid) samples from the
distribution: this \emph{empirical Christoffel function} still provides accurate
estimates for the support, and some convergence results in this case are also known~\cite{pauwels2021data}. 

In contrast to the asymptotic analysis of Christoffel functions reviewed above,
our interest is in developing error bounds that hold with a finite number of samples.
This paper is an extension of a conference 
paper~\cite{devonport2021datadriven} 
that reported our preliminary
work on support set estimation with polynomial Christoffel functions in the context of
data-driven reachability. 
In~\cite{devonport2021datadriven}, 
we investigated empirical inverse Christoffel functions constructed from iid
trajectory simulation data, and
provided a finite-sample guarantee of the probabilistic accuracy of reachable
set estimates produced by sublevel sets of this function.
The present paper significantly extends the
theory of finite-sample error bounds for support set estimators derived from
Christoffel functions by applying techniques from Bayesian PAC analysis,
a variation of classical PAC analysis that has been successfully applied to
Gaussian process classifiers~\cite{seeger2002pac}, kernel support vector
machines~\cite{langford2003pac}, and minimum-volume covering
ellipsoids~\cite{dolia2006minimum}.
This extension leverages a formal connection between the kernel empirical
inverse Christoffel function investigated by Askari
\emph{et al.}~\cite{askari2018kernel} and the
posterior variance of a Gaussian process regression model. In
conjunction with the PAC-Bayes theorem, the connection can be used to derive finite-sample
bounds for kernelized empirical inverse Christoffel functions. 

The application of Bayesian PAC analysis to the theory of Christoffel function
support set estimators has two benefits. First, it allows for the construction
of finite-sample guarantees for kernelized inverse Christoffel functions, which
to our knowledge have not been proved before. Second, when applied to polynomial
empirical inverse Christoffel function estimators, Bayesian PAC analysis can
provide guarantees of probabilistic accuracy and confidence with much greater
sample efficiency than the finite-sample bounds provided by classical VC
dimension bound arguments.

\subsection{List of Acronyms and Symbols}%
\label{sub:list_of_notation}

\quad

\begin{tabular}{p{0.17\textwidth}p{0.7\textwidth}}
\toprule
symbol & definition \\
\midrule
\multicolumn{2}{l}{
\emph{Reachability Analysis}
} \\
$\Phi(t_1;t_0,x_0,d)$ & State transition function, evolving a state
$x_0$ at time $t_0$ under disturbance $d$ to a state at time $t_1$\\
$\initialset$ & Set of initial states\\
$\distset$ & Set of disturbances\\
$t_0,t_1$ & Initial and final times\\
$\rs$ &  Forward reachable set\\
$\ars$ & Approximation of forward reachable set\\
\midrule
\multicolumn{2}{l}{
\emph{Probability, Statistical Learning Theory}
} \\
$\Ex{\cdot}$ & Expected value of a random variable\\
$\Pr{\cdot}$ & Probability of an event defined in terms of random variables\\
$D_{KL}(P||Q)$ & Kullback-Leibler (KL) divergence from $P$ to $Q$ \\
$D_{ber}(p||q)$ & KL divergence between Bernoulli
distributions with parameters $p$ and $q$ \\
$X$ & Random variable whose support we wish to estimate\\
$F_1$ & CDF of the chi-square distribution with 1 degree of freedom\\
$\dom$ & Domain of $X$ \\
$P_X$ & Probability measure of the distribution of $X$\\
$P_X^N$ & Probability measure of $N$ iid samples from $X$\\
PAC & Probably Approximately Correct\\
iid & Independent and Identically Distributed\\
$\epsilon$,$\delta$ & accuracy and confidence parameters in PAC guarantees\\
$\cclass$ & Concept class \\
$\bar{c}_Q$ & ``central concept'' of the posterior measure $Q$ \\
$P$,$Q$ & Prior and posterior probability measures on $\cclass$\\
$W_P$, $W_Q$ & Parametric representations of $P$ and $Q$\\
$C_P$, $C_Q$ & Stochastic estimators: random variables on $\cclass$ distributed
according to $P$, $Q$\\
$\ell(c,x)$ & statistical loss function comparing a concept $c$ and a
datum $x$\\
$r(c)$ & risk: average of $\ell(c,x)$ for $x\sim X$\\
$\hat{r}(c)$ & empirical estimate of $r(c)$ from data $x_1,\dotsc,x_N$\\
$\srisk$ & stochastic risk: average of $\ell(c,x)$ for $x\sim X$, $c\sim Q$\\
$\esrisk$ & empirical estimate of $\srisk$ from data $x_1,\dotsc,x_N$\\
\midrule
\multicolumn{2}{l}{
\emph{Christoffel Functions}
} \\
$M_m$, $\hat{M}_m$ & Matrix of moments of degree $\le m$ and its empirical
estimate\\
$\Mmat$ & Empirical moment matrix with diagonals modified by $\sigma_0$\\ 
$z_m(x)$ & vector of monomials with degree $\le m$ evaluated at point $x$\\
$\hat{\kappa}^{-1}(x)$ &  Polynomial empirical inverse Christoffel function
evaluated at $x$\\
$\hat{\kappa}^{-1}(x)$ &  kernelized empirical inverse Christoffel function\\
$C(x)$ & Christoffel-based support set estimator, output of
Algorithms~\ref{alg:cfun_classical},\ref{alg:cfun_pacbayes},
and~\ref{alg:cfun_pacbayes_poly}\\
\midrule
\multicolumn{2}{l}{
\emph{Gaussian Processes}
} \\
$m$, $k$ & prior mean and covariance functions\\
$m_q$, $k_q$ & posterior mean and covariance functions\\
$K$ & kernel Gramian matrix, $K_{ij}=k(x_i,x_j)$\\
$k_D$ & vector of kernel evaluations on data, $(k_D(x))_i=k(x_i,x)$\\
$\Norm{\mu,\Sigma}$ & Multivariate normal with mean $\mu$ and covariance
$\Sigma$\\
$\mathcal{GP}(m,k)$ & Gaussian process with mean and covariance functions $m$, $k$\\
\bottomrule
\end{tabular}

\section{Preliminaries}%
\label{sec:preliminaries}

\subsection{Probabilistic Reachability and Estimation of Support}%
\label{sub:probabilistic_reachability_and_estimation_of_support}

Consider a
dynamical system with a state transition function
$\Phi(t_1;t_0, x_0, d)$
that maps an initial state $x(t_0)=x_0\in\R^n$ at time $t_0$ to a unique final state
at time $t_1$, under
a disturbance $d:[t_0,t_1]\to\R^{w}$.
For instance, when the system state dynamics $ \dot{x}(t) = f(t,x(t),d(t))$
are known and have unique solutions on the interval $[t_0,t_1]$, then
$\Phi(t_1;t_0, x_0, d)$ is the solution of the state dynamics equation at time
$t_1$ with initial condition $x(t_0)=x_0$.
For the problem of forward reachability analysis, we are also given an
\emph{initial set} $\initialset\subset\R^n$, a set $\distset$ of allowed
disturbances and a time range $[t_0,t_1]$. The \emph{forward reachable set} is
then defined as the set of all states to which the system can transition in the
time range $[t_0,t_1]$ with initial states in $\initialset$ and disturbances in
$\distset$, that is the set
\begin{equation}
     \rs = \{\Phi(t_1;t_0,x_0, d) : x_0\in\initialset, d\in\distset\}.
\end{equation}

To tackle the problem of estimating the forward reachable set by statistical means, we
add probabilistic structure to the reachability problem 
by taking random variables $X_0$ and $D$ supported on $\initialset$ and $\distset$ respectively. 
These random variables then induce a random variable
$X=\Phi(t_1;t_0, X_0, D)$, 
whose support is precisely $\rs$ and
whose probability measure we denote as $P_X$.
A measure-theoretic interpretation $P_X(A)$ for a set $A$ is a measure of overlap between
$\rs$ and $A$: $P_X(A)$ is nonzero
only if $A$ has nonempty intersection with $\rs$, and $P_X(A)=1$ only if
$\rs\subseteq A$.
A probabilistic interpretation of $P_X(A)$ is that if we take samples $x_0$ and $d$ of the random
variables $X_0$ and $D$, then the vector $\Phi(t_1;t_0,x_0,d)$ lies in
$A$ with probability $P_X(A)$.
These interpretations motivate $P_X(A)$ as a
measure of \emph{probabilistic accuracy}: if a set $A\subseteq\R^n$ has a
greater measure $P_X(A)$ than a set $B\subseteq\R^n$, then $A$ is a more
accurate approximation of the reachable set than $B$, in the sense that it
``misses'' less of the probability mass than $B$ does.
In the probabilistic
version of the forward reachability problem, our goal is to find
reachable set approximations $\ars$ such that $P_X(\ars)$ is close to 1.
In addition, we will seek $\ars$ with low volume, in order to preclude trivial
estimates such as $\ars=\R^n$ and to generally minimize the conservatism of the
approximation.

The probabilistic relaxation of the forward reachability problem is a
statistical problem of \emph{support set estimation}
based on a finite set of observations. The support of a random variable is the
range of values it can assume: for example, if $X$ admits a probability density
function $p_X$, then the support of $X$ is the closure of the set
$\{x:p_X(x)\ne0\}$. 
In addition to the control-theoretic application developed above, support set
estimation has several applications in statistics and data science, such as
outlier and novelty detection~\cite{pauwels2021data,pauwels2016sorting,askari2018kernel}. It is therefore useful to consider the problem
for general random variables: we will do so for the theoretical developments in
this paper, returning to the reachability application in the numerical examples
of Section~\ref{sec:examples}. 
Formally, we address the following problem.
\begin{problem}\label{prb:pac_support}
    Given accuracy and confidence parameters $\epsilon,\delta\in(0,1)$
    and
    a random variable $X$ whose support lies in a compact domain $\dom\subseteq \R^n$,
    collect data $x_1,\dotsc,x_N\simiid X$
    and use them to find a set $c(\epsilon,\delta;x_1,\dotsc,x_N)\subset\dom$ such that 
    the following bound holds:
    \begin{equation}
        \label{eq:prob_pac_bound}
        P_X^N\left(\{x_1,\dotsc,x_N:P_X(c(\epsilon,\delta;x_1,\dotsc,x_N)) \ge 1-\epsilon \}\right) \ge 1-\delta.
    \end{equation}
\end{problem}
The bound~\eqref{eq:prob_pac_bound} is known as a Probably Approximately Correct
(PAC) bound, which appears frequently in statistical learning theory.
The two probability inequalities in~\eqref{eq:prob_pac_bound} are interpreted as
assertions of probabilistic accuracy and confidence:
\begin{itemize}
    \item \emph{accuracy}: the inner inequality 
        $P_X(c(\epsilon,\delta;x_1,\dotsc,x_N)) \ge 1-\epsilon$
        asserts that the probabilistic accuracy of 
        the estimator is at least $1-\epsilon$.
    \item confidence: the outer inequality 
        asserts that the accuracy statement holds with probability $1-\delta$
        with respect to $P_X^N$. The probability, and hence the confidence, 
        is with respect to the data: $P_X^N$ is the probability measure
        corresponding to $N$ iid observations drawn from $X$, so 
        $P_X^N(A)$ for $A\subseteq\dom^N$ denotes the probability that 
        $x_1,\dotsc,x_N\in A$. 
        Thus the inequality
        $P_X^N(\{x_1,\dotsc,x_N: \cdots \}) \ge 1-\delta$
        asserts that the observed data set $x_1,\dotsc,x_N$ belongs, with probability
        at least $1-\delta$, to the class of
        data sets sufficiently informative to yield an estimator $c$ satisfying the
        accuracy assertion.
\end{itemize}
For brevity, we drop the arguments of the estimator
$c(\epsilon,\delta;x_1,\dotsc,x_N)$ 
from the notation, understanding that an estimator $c$ is always constructed
using a given set of data $x_1,\dotsc,x_N$, with respect to given parameters
$\epsilon$ and $\delta$.
The sample size $N$ is a fixed problem parameter: indeed, finding a suitable $N$ is part of
solving the problem.
In addition to the requirements given in Problem~\ref{prb:pac_support}, 
we may also impose that the estimator $c$ be
drawn from a pre-specified class of admissible estimators. Such a condition
allows us to restrict attention to computationally feasible sets, or sets with
certain properties such as compactness for cases when the reachable set is known
to be compact.
In classical PAC analysis, the structure of the pre-specified class also
plays a key role in determining an appropriate $N$.

\subsection{Christoffel Functions}%
\label{sub:christoffel_functions}

Given a finite measure $P_X$ on $\R^n$ and a positive integer $\PCord$,
the Christoffel function of order $m$ is defined as the ratio
$
    \kappa(x) = 1/z_m(x)^\top M_{m}^{-1}z_m(x),
$
where $z_m(x)$ is the vector of monomials of degree $\le m$, and
where $M_{m}$ is the matrix of moments
$
    M_m = \int_\dom z_m(x) z_m(x)^\top dP_X(x).
$
We assume throughout that $M_m$ is positive definite, ensuring that $M_m^{-1}$
exists. 
The Christoffel function has several important applications in
approximation theory~\cite{nevai1986geza}, where its asymptotic properties are used to prove the
regularity and consistency of Fourier series of orthogonal
polynomials~\cite{xu1995christoffel}. 
For our
purposes, it is more convenient to use the \emph{inverse Christoffel function}
$
    {\kappa(x)}^{-1} = z_m(x)^\top M_{m}^{-1}z_m(x),
$
which is a polynomial of degree $2m$.
In Problem~\ref{prb:pac_support}, and more generally in the problem of estimating a
probability distribution from samples, $P_X$ is unknown. 
In this case, 
we instead use an empirical estimate for
the moment matrix $M_{m}$, namely
$
\hat{M}_m
=
\frac{1}{N}
\sum_{i=1}^N
z_m(x_i)
z_m(x_i)^\top
$.
The matrix $\hat{M}_m$ is positive semidefinite: it is additionally
positive definite, and hence nonsingular, if $N \ge \binom{n+m}{n}$ and
$x_1,\dotsc,x_N$ do not all belong to the zero set of a single degree $m$
polynomial. 
It is useful, both numerically and theoretically, to modify this empirical
estimate adding a scaled identity perturbation: thus we take
$
\hat{M}_{m,\sigma}
= 
\sigma^2 I 
+ 
\frac{1}{N}
\sum_{i=1}^N
z_m(x_i)
z_m(x_i)^\top,
$
as our empirical moment matrix in the sequel,
where $\sigma^2 > 0$ is a term fixing the magnitude of the
perturbation. 
In addition to its role in developing the kernel extension, 
the $\sigma^2I$ term generally improves the conditioning of the empirical moment
matrix and ensures nonsingularity in all cases.
The empirical moment matrix $\hat{M}_{m,\sigma}$ itself defines a Christoffel
function, whose inverse
\begin{equation}
    \label{eq:empirical_cfun_def}
    \hat{\kappa}^{-1}(x)=z_m(x)^\top 
        \hat{M}_{m,\sigma}^{-1}z_m(x)
\end{equation}
is called the \emph{empirical inverse Christoffel function}.

The dyadic sum $\frac{1}{N}\sum_{i=1}^N z_m(x_i)z_m(x_i)^\top$ can be expressed
as the matrix product $\frac{1}{N} Z Z^\top$, where
$Z\in\R^{\binom{n+m}{n}\times N}$ is the matrix
    $
    Z =
    \begin{bmatrix}
        z_m(x_i) & \hdots & z_m(x_N)
    \end{bmatrix}
    $
of polynomial features.
By expressing the dyadic sum this way, we can apply
the matrix inversion lemma to express the inverse of the empirical moment
matrix
as
\begin{equation}
    \label{eq:empirical_moment_matrix_mil}
    \hat{M}_{m,\sigma} = 
    \left(
        \sigma^2 I 
        + 
        \tfrac{1}{N}
        Z Z^\top
    \right)^{-1}
    =
    \sigma^{-2}\left(I - Z\left(\sigma^2 N I + Z^\top
    Z\right)^{-1}Z^\top\right).
\end{equation}
This expression for $\hat{M}_{m\sigma}$ allows us to rewrite
the empirical inverse Christoffel function as
\begin{equation}
    \label{eq:pre_kernel}
    \hat{\kappa}^{-1}(x)
    =
    N
    \sigma_0^{-2} z_m(x)^\top z_m(x)
    -
    N
    \sigma_0^{-2} z_m(x)^\top Z \left(\sigma_0^2 I + Z^\top Z\right)^{-1}Z^\top z_m(x),
\end{equation}
where we have made the change of variables $\sigma^2=\sigma_0^2/N$.
The vector $z_m$ enters~\eqref{eq:pre_kernel} only through 
the inner products $z_m(x_i)^\top z_m(x_j)$:
The matrix $Z^\top Z\in\R^{N\times N}$ has elements 
$(Z^\top Z)_{ij}=z_m(x_i)^\top z_m(x_j)$, and the matrix-vector product
$Z^\top z_m(x)$ has elements 
$(Z^\top z_m(x))_i=z_m(x_i)^\top z_m(x)$. 
By replacing the inner product
$z_m(x_i)^\top z_m(x_j)$ with
an arbitrary positive definite%
\footnote{Here, and throughout the paper, we mean positive definite in the
sense of reproducing kernel Hilbert spaces and kernel machines, which is
that a square matrix $K$ with elements $(K)_{ij}=k(x_i,x_j)$ is a positive definite
matrix.}
function $k:\R^{n}\times\R^{n}\to\R$ and rescaling by a factor of
$\sigma_0^{2}/N$, we obtain the kernelized variant of the empirical inverse
Christoffel function,
\begin{equation}
    \label{eq:kernel_cfun_def}
    \kappa^{-1}(x)
    =
    k(x,x) - k_D(x)^\top\left(\sigma_0^2 I + K\right)^{-1}k_D(x),
\end{equation}
where $K\in\R^{N\times N}$ and $k_D(x)\in\R^N$ are defined as
\begin{equation}
    \begin{aligned}
        \label{eq:gramian_and_kvec}
        K_{ij} &=k(x_i,x_j), &
        (k_D(x))_i &=k(x_i,x).
    \end{aligned}
\end{equation}

\section{Christoffel Function Estimators of Support}%
\label{sec:christoffel_function_estimators_of_support}

\begin{wrapfigure}{R}{0.5\textwidth}
    \begin{minipage}[t]{0.46\textwidth}
    \begin{algorithm}[H]
        \caption{To estimate a support set by a polynomial empirical inverse
        Christoffel function satisfying a classical PAC bound.}\label{alg:cfun_classical}
        \begin{algorithmic}
        \STATE inputs: random variable $X$ with support in $\dom$; polynomial order $m\in\N_+$; PAC parameters
        $\epsilon,\delta\in(0,1)$; noise parameter $\sigma_0^2\in\R_{++}$;
        \STATE $N\gets \ceil{\frac{5}{\epsilon}\left(\log\frac{4}{\delta}+\binom{n+2m}{n}\log\frac{40}{\epsilon}\right)}$
        \FOR {$i\in\{1,\dotsc,N\}$} 
          \STATE sample $x_i\sim X$
        \ENDFOR
        \STATE
        $\Mmat\gets \sigma_0^2I + \frac{1}{N}\sum\limits_{i=1}^N z_m(x_i)z_m(x_i)^\top$
        \STATE
        $\alpha \gets \max_i z_m(x_i)^\top\Mmat^{-1} z_m(x_i)$
        \STATE
        $C(x)=z_m(x)^\top\Mmat^{-1}z_m(x)$;
        \RETURN $\ind{C(x)\le\alpha}$;
        \end{algorithmic}
    \end{algorithm}
\end{minipage}
\end{wrapfigure}

Algorithms~\ref{alg:cfun_classical} and~\ref{alg:cfun_pacbayes} are procedures
to estimate the support of a random variable with a sublevel set of an
empirical inverse Christoffel function, where the only information needed from
the random variable is a collection of iid samples. 
Algorithm~\ref{alg:cfun_classical} is designed to satisfy a classical PAC bound.
This has the advantages of providing an \emph{a priori} sample bound, and of
admitting a fairly direct proof, which is given in 
Section~\ref{sub:classical_pac_analysis_of_christoffel_function_estimators}. The
essence of the proof is to demonstrate that the sublevel sets of a polynomial empirical
inverse Christoffel function of a given order inhabit a concept class of known
VC dimension. This argument is valid for polynomial Christoffel functions of any
order,
but it is generally not valid for kernelized Christoffel functions. Indeed, the
classes of sublevel sets of certain kernelized empirical inverse Christoffel
functions can have infinite VC dimension, so a classical PAC bound is not
possible in general for kernelized empirical inverse Christoffel functions.
Algorithm~\ref{alg:cfun_pacbayes} is designed to satisfy a Bayesian PAC bound
which is developed in
Section~\ref{sub:bayesian_pac_analysis_of_christoffel_function_estimators}.
Unlike the classical PAC bound provided for Algorithm~\ref{alg:cfun_classical},
this Bayesian PAC bound is applicable to all kernelized empirical inverse
Christoffel functions, including those whose sublevel sets have infinite VC
dimension. 
When applied to 
polynomial empirical inverse Christoffel
functions as a special case, we find that it is more sample-efficient than the
classical PAC bound:
in some of the examples in Section~\ref{sec:examples}, the
Bayesian PAC bound requires an order of magnitude fewer samples to achieve the
same accuracy and confidence as that guaranteed by the classical PAC bound.
The disadvantages of the Bayesian PAC approach is that the required number of
samples is not known \emph{a priori}, since certain terms in the bound depend on
the data. Algorithm~\ref{alg:cfun_pacbayes} therefore takes an iterative
approach, taking samples in batches and re-evaluating the Bayesian PAC bound
after each batch until it reaches the desired level of accuracy.

\begin{remark}
    \label{rmk:rs_subset}
    In some reachability problems, we are only interested in computing a
    reachable set for a subset of the state variables.
    For example, suppose the state is $(x_1,\dotsc,x_n)\in\R^n$, and we wish to
    verify a safety specification involving only the states $x_1,\dotsc,x_s$,
    where $s<n$: a reachable set for the states $x_1,\dotsc,x_s$ would suffice
    for this problem.
    In cases like this, the algorithms presented in this section can be modified
    to use only the first
    $s$ elements of the samples. The output of the algorithm is then
    an empirical inverse Christoffel function with domain $\R^s$ whose sublevel
    set $\ars$ estimates the reachable set for the reduced set of states.
    In the sequel, we refer to this variation of the algorithms in this
    section as their \emph{reduced-state} variations.
\end{remark}

\subsection{Classical PAC Analysis}%
\label{sub:classical_pac_analysis_of_christoffel_function_estimators}

PAC bounds originate in study of empirical
risk minimization problems in statistical learning theory. Our strategy to prove a PAC bound for
Algorithm~\ref{thm:cfun_pac} is to express Problem~\ref{prb:pac_support} as an
empirical risk minimization problem and to then apply the tools of statistical
learning theory.

In empirical risk minimization, the objective is to match a concept
$c\subseteq\dom$ from a pre-specified concept class
$\cclass\subseteq 2^\dom$ 
to an
unknown random variable $X$ supported on $\dom$
using only a finite set of iid observations $x_1,\dotsc,x_N$ of $X$.
How well a concept
matches $X$ is quantified by the statistical risk $r(c)=\Ex{\ell(c,X)}$ defined
by a loss function $\ell:\cclass\times\dom\to\R_+$ and the unknown measure
$P_X$: a lower risk indicates a better match.
Since we do not know $P_X$, we cannot
directly evaluate the statistical risk. However, we can use
the empirical risk
$\erisk=\avgion \ell(c,x_i)$ as a proxy for the true risk, and select a concept
to match the data on the basis of minimizing the empirical risk. 

Whether empirical risk minimization actually selects a concept with low risk
depends on how much $\erisk$ differs from $\risk$. A classical PAC bound
provides a bound on the difference $\risk-\erisk$, or the absolute difference,
that holds with high probability.
We use the following result from~\cite{alamo2009randomized}, which
gives a quantitative sample bound that depends on the
Vapnik-Chervonenkis (VC) dimension~\cite{vidyasagar2002learning} of the concept
class. The VC dimension of a concept is a combinatorial measure of its
complexity based on the expressiveness of its concepts. 
\begin{lemma}[\cite{alamo2009randomized}, Corollary 4]
    \label{lem:generic_pac}
    Let $\cclass$ be a concept class of sets with VC dimension $\le d$, and let
    $\ell:\cclass\times\dom\to\{0,1\}$ denote a $\{0,1\}$-valued loss function.
    If
    \begin{equation}
        \label{eq:generic_pac_bound}
        N \ge \frac{5}{\epsilon}\left( \log\frac{4}{\delta} + d \log\frac{40}{\epsilon} \right),
    \end{equation}
    and if $\hat{r}(c)=0$, then $P_X^N\left(\{x_1,\dotsc,x_N : r(c) \le \epsilon\}\right) \ge 1-\delta$.
\end{lemma}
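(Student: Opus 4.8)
Since Lemma~\ref{lem:generic_pac} is quoted from~\cite{alamo2009randomized}, the paper itself will simply cite it; but the result is the classical VC sample-complexity bound in the realizable (zero empirical risk) case, and I would reconstruct its proof as follows. First I would reduce the claim to a uniform-convergence statement: let $B$ denote the ``bad'' event that some $c\in\cclass$ has $\hat r(c)=0$ while $r(c)>\epsilon$. On the complement of $B$, \emph{every} concept consistent with the sample has true risk at most $\epsilon$, so in particular this holds for any $c$ satisfying the hypothesis $\hat r(c)=0$; hence it suffices to prove $P_X^N(B)\le\delta$ whenever $N$ meets the stated threshold.

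Next I would bound $P_X^N(B)$ by symmetrization with a ghost sample. Draw an independent second sample $x_1',\dots,x_N'\simiid X$. If a concept $c$ witnessing $B$ exists, the number of ghost points it misclassifies stochastically dominates a $\mathrm{Binomial}(N,\epsilon)$ variable, so a one-sided binomial tail bound (valid once $\epsilon N\ge 2$, which the threshold on $N$ ensures) shows that with probability at least $1/2$ it errs on at least $\epsilon N/2$ of them. Consequently $P_X^N(B)\le 2\,P_X^{2N}(B')$, where $B'$ is the event that some $c\in\cclass$ is consistent with the first $N$ points while misclassifying at least $\epsilon N/2$ of the second $N$.

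Then I would finish with the permutation / Sauer--Shelah argument. Condition on the unordered pool of $2N$ points. The number of distinct loss patterns that members of $\cclass$ can realize on these points is at most the growth function $\Pi_\cclass(2N)$, which by the Sauer--Shelah lemma is bounded by $\left(\tfrac{2eN}{d}\right)^{d}$ since $\cclass$ has VC dimension at most $d$ and $2N\ge d$. For each fixed pattern that errs on at least $k:=\lceil\epsilon N/2\rceil$ of the $2N$ points, a uniformly random split of the pool into two halves of size $N$ places all of those errors in the second half, and none in the first, with probability at most $2^{-k}\le 2^{-\epsilon N/2}$. A union bound over patterns gives $P_X^{2N}(B')\le\left(\tfrac{2eN}{d}\right)^{d}2^{-\epsilon N/2}$, hence
\begin{equation}
    P_X^N(B)\;\le\;2\left(\frac{2eN}{d}\right)^{d}2^{-\epsilon N/2}.
\end{equation}

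Finally I would calibrate $N$: requiring the right-hand side to be at most $\delta$, taking logarithms, and disposing of the self-referential $\log N$ term by the usual crude estimate $\log\frac{2eN}{d}\le\log\frac{c}{\epsilon}$ (legitimate on the relevant range of $N$), one solves for $N$ and, after tracking the numerical constants conservatively, arrives at the stated sufficient condition $N\ge\frac5\epsilon\!\left(\log\frac4\delta+d\log\frac{40}{\epsilon}\right)$. I expect the two delicate points to be (i) the symmetrization step, where the binomial lower-tail estimate must be checked under the given $N$, and (ii) the final constant-chasing, which is routine but must be carried out carefully to land on the specific constants $5$, $4$, $40$ rather than merely on a bound of the same asymptotic form.
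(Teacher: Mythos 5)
There is no internal proof to compare against: the paper imports Lemma~\ref{lem:generic_pac} verbatim as Corollary~4 of \cite{alamo2009randomized} and never proves it, so your reconstruction is measured only against the cited literature. Your outline is the classical realizable-case VC argument---reduction to the version space, ghost-sample symmetrization (with the needed condition $\epsilon N\ge 2$ indeed guaranteed by the stated threshold), and the Sauer--Shelah/permutation step giving $P_X^N(B)\le 2\left(\tfrac{2eN}{d}\right)^{d}2^{-\epsilon N/2}$---and that skeleton is sound.

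The genuine gap is the final calibration, which is the only part that turns a generic $O\!\left(\tfrac{d}{\epsilon}\log\tfrac{1}{\epsilon}+\tfrac{1}{\epsilon}\log\tfrac{1}{\delta}\right)$ bound into the precise statement being quoted. From your tail bound one gets
\begin{equation*}
    \epsilon N \;\ge\; \frac{2}{\ln 2}\left(\ln\frac{2}{\delta} + d\,\ln\frac{2eN}{d}\right),
\end{equation*}
and you assert, without carrying it out, that eliminating the self-referential $\ln N$ and ``tracking the numerical constants conservatively'' lands exactly on $N\ge\frac{5}{\epsilon}\left(\log\frac{4}{\delta}+d\log\frac{40}{\epsilon}\right)$. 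That implication is not checked (the crude substitution $\ln\frac{2eN}{d}\le\ln\frac{c}{\epsilon}$ needs a uniform justification over the admissible range of $N$, $d$, $\epsilon$, $\delta$, and cross terms between the $\delta$- and $d$-parts must be handled), and the constants in \cite{alamo2009randomized} are in fact obtained there by a somewhat different chain of one-sided uniform-deviation estimates, so there is no guarantee your route reproduces these specific numbers rather than constants of the same form. To make the proposal a proof of the lemma \emph{as stated}, you must either complete that numerical verification explicitly or do what the paper does and simply invoke the cited corollary.
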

A concept class with higher VC dimension generally provides greater-fidelity
estimates than one with lower VC dimension, but is also more prone to
overfitting: informally, this is the reason why a concept class with higher VC
dimension requires a larger sample bound for the same accuracy and confidence
than one with lower VC dimension.

To apply Lemma~\ref{lem:generic_pac}, we must show that the sublevel sets of a
polynomial empirical inverse Christoffel function belong to a concept class
of bounded VC dimension. 
One such class is the class of superlevel sets of degree $2k$ polynomials: the
following Lemma from~\cite{dudley1978central}, provides a bound on the
VC dimension. 

\begin{lemma}[\cite{dudley1978central}, Theorem 7.2]
    \label{lem:pos_vc}
    Let $V$ be a vector space of functions $g:\R^n\to\R$ with dimension $d$.
    Then the class of sets
        $
        \text{Pos}(V) = \left\{\ \{x : g(x)\ge 0\}, g\in V\right\}
        $
    has VC dimension $\le d$.
\end{lemma}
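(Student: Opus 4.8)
The plan is to argue by contradiction using a dimension count on the evaluation functionals of $V$. Suppose the VC dimension of $\text{Pos}(V)$ were at least $d+1$, so that some set $S=\{x_1,\dotsc,x_{d+1}\}\subseteq\R^n$ is shattered by $\text{Pos}(V)$. I would consider the linear map $E\colon V\to\R^{d+1}$ defined by $E(g)=(g(x_1),\dotsc,g(x_{d+1}))$. Since $\dim V=d$, the image $E(V)$ is a subspace of $\R^{d+1}$ of dimension at most $d$, hence a \emph{proper} subspace, so there exists a nonzero vector $c=(c_1,\dotsc,c_{d+1})$ orthogonal to $E(V)$; that is, $\sum_{i=1}^{d+1}c_i\,g(x_i)=0$ for every $g\in V$. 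Replacing $c$ by $-c$ if necessary, I may assume that the index set $P=\{i:c_i>0\}$ is nonempty, and I write $N=\{i:c_i<0\}$ for the set of strictly negative coordinates.

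Next I would invoke shattering to select a ``wrong-sign'' witness. Because $S$ is shattered by $\text{Pos}(V)$, there is some $g\in V$ whose positivity set picks out exactly the points indexed by $N$, i.e.\ $g(x_i)\ge 0$ for $i\in N$ and $g(x_i)<0$ for $i\notin N$ (in particular for every $i\in P$). Evaluating the orthogonality identity $\sum_i c_i g(x_i)=0$ on this $g$ and splitting the sum according to the sign of $c_i$ then produces a contradiction: every term with $i\in P$ has $c_i>0$ and $g(x_i)<0$, hence is strictly negative, and since $P\neq\emptyset$ these contribute a strictly negative total; every term with $i\in N$ has $c_i<0$ and $g(x_i)\ge 0$, hence is nonpositive; the remaining terms (where $c_i=0$) vanish. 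Thus $\sum_i c_i g(x_i)<0$, contradicting the relation $\sum_i c_i g(x_i)=0$. Therefore no set of $d+1$ points can be shattered, and the VC dimension of $\text{Pos}(V)$ is at most $d$.

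The argument is short and essentially linear-algebraic; the only place that needs care is the bookkeeping around the non-strict inequality in the definition of $\text{Pos}(V)$ and the degenerate case $N=\emptyset$. Choosing the witness $g$ so that its positivity set is \emph{exactly} $N$ handles both issues simultaneously: when $N=\emptyset$ this just means $g$ is strictly negative on all of $S$, and in every case the strictly negative contribution to $\sum_i c_i g(x_i)$ comes solely from the nonempty index set $P$, so the distinction between ``$\ge$'' and ``$>$'' on the indices in $N$ is irrelevant. I do not expect any genuine obstacle beyond stating this cleanly; the one point worth double-checking is that the notion of shattering in play is the usual one, namely that every subset $T\subseteq S$ is realized as $\{x:g(x)\ge 0\}\cap S=T$ for some $g\in V$, which is precisely what supplies the witness $g$ in the second step.
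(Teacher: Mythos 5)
Your argument is correct: the evaluation map, the orthogonal witness $c$ with $P\neq\emptyset$ after a sign flip, and the shattering witness whose positivity set on $S$ is exactly the negatively-indexed points together force $\sum_i c_i g(x_i)<0$, contradicting orthogonality, and your handling of the ``$\ge$'' boundary and the $N=\emptyset$ case is sound. The paper itself offers no proof of this lemma---it is quoted directly from Dudley---and your argument is essentially the standard linear-algebraic proof given in that reference, so there is nothing to flag.
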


The PAC bound, and hence the validity of Algorithm~\ref{alg:cfun_classical}
follows from Lemmas~\ref{lem:pos_vc} and~\ref{lem:generic_pac} by framing the
support estimation problem as one of empirical risk minimization.
\begin{theorem}
    \label{thm:cfun_pac}
    The support set estimate produced by Algorithm~\ref{alg:cfun_classical},
    that is the set $\{x\in\dom: C(x) \le \alpha\}$
    where $C(x)=z_m(x)^\top\Mmat^{-1}z_m(x)$,
    $\alpha=\max_i C(x_i)$,
    satisfies the PAC bound
        $
        P_X^N(\{x_1,\dotsc,x_N:P_X(\{x\in\dom: C(x) \le \alpha\}) \ge
        1-\epsilon\})\ge 1-\delta,
        $
    and thereby solves Problem~\ref{prb:pac_support} with parameters
    $\epsilon,\delta$.
\end{theorem}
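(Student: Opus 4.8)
The plan is to frame the support-estimation task as an empirical risk minimization problem to which Lemmas~\ref{lem:generic_pac} and~\ref{lem:pos_vc} apply directly. First I would fix the concept class: let $V$ be the vector space spanned by the monomials of degree $\le 2m$, which has dimension $d = \binom{n+2m}{n}$, and let $\cclass = \text{Pos}(V)$ be the class of superlevel sets of degree-$2m$ polynomials. The key observation is that for the matrix $\Mmat$ produced by the algorithm, the set $\{x\in\dom : C(x)\le\alpha\}$, where $C(x)=z_m(x)^\top\Mmat^{-1}z_m(x)$, can be written as $\{x : g(x)\ge 0\}$ with $g(x) = \alpha - z_m(x)^\top\Mmat^{-1}z_m(x)$; since $z_m(x)^\top\Mmat^{-1}z_m(x)$ is a polynomial in $x$ of degree $2m$ (entrywise products of two degree-$\le m$ monomial vectors) and $\alpha$ is a constant, $g\in V$. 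Hence the algorithm's output is a member of $\cclass$, and by Lemma~\ref{lem:pos_vc} the VC dimension of $\cclass$ is at most $d = \binom{n+2m}{n}$.

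Next I would define the loss function $\ell(c,x) = \ind{x\notin c}$, so that the risk is $r(c) = \Pr[X\notin c] = 1 - P_X(c)$ and the empirical risk is $\erisk[c] = \tfrac{1}{N}\sum_{i=1}^N \ind{x_i\notin c}$. The crucial step is to verify that the concept $c = \{x\in\dom : C(x)\le\alpha\}$ chosen by the algorithm achieves zero empirical risk: this holds precisely because $\alpha$ is defined as $\max_i C(x_i)$, so $C(x_i)\le\alpha$ for every sampled point $x_i$, i.e.\ every $x_i$ lies in $c$, giving $\erisk[c]=0$. With $\erisk[c]=0$, VC dimension bounded by $d=\binom{n+2m}{n}$, and the sample size set by the algorithm to $N = \ceil{\tfrac{5}{\epsilon}(\log\tfrac{4}{\delta} + \binom{n+2m}{n}\log\tfrac{40}{\epsilon})}$ — which is exactly the threshold~\eqref{eq:generic_pac_bound} with $d=\binom{n+2m}{n}$ — Lemma~\ref{lem:generic_pac} yields $P_X^N(\{x_1,\dotsc,x_N : r(c)\le\epsilon\})\ge 1-\delta$. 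Translating back via $r(c) = 1 - P_X(c)$ gives $P_X^N(\{x_1,\dotsc,x_N : P_X(c)\ge 1-\epsilon\})\ge 1-\delta$, which is the claimed PAC bound, and the matching of $N$ to the requirement of Problem~\ref{prb:pac_support} completes the solution.

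The one subtlety worth spelling out carefully — and the place where the argument could go wrong if handled loosely — is the measurability/well-posedness of applying Lemma~\ref{lem:generic_pac} to a concept $c$ that is itself a function of the data. The lemma as stated bounds the risk uniformly over the class via VC theory, so the conclusion holds simultaneously for all $c\in\cclass$ with the stated probability; since the algorithm's data-dependent output always lands in $\cclass$ and always has zero empirical risk, the bound transfers to it. I would state this explicitly: the event $\{x_1,\dotsc,x_N : \sup_{c\in\cclass, \erisk[c]=0} r(c) \le \epsilon\}$ has probability $\ge 1-\delta$ by the uniform VC bound underlying Lemma~\ref{lem:generic_pac}, and on this event the particular $c$ returned by the algorithm satisfies $r(c)\le\epsilon$. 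Everything else — the polynomial-degree count, the identification of $\binom{n+2m}{n}$ as the dimension of the space of degree-$\le 2m$ polynomials in $n$ variables, and the arithmetic that the algorithm's $N$ meets the lemma's threshold — is routine bookkeeping. The main obstacle is really just being precise about this uniform-over-the-class reading of the VC bound; the rest is a direct assembly of the two cited lemmas.
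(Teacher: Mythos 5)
Your proposal is correct and follows essentially the same route as the paper's own proof: the same concept class $\text{Pos}(\R[x]_{2m}^n)$ with VC dimension bounded via Lemma~\ref{lem:pos_vc}, the same zero-one loss with $\hat{r}(c)=0$ guaranteed by the choice $\alpha=\max_i C(x_i)$, and the same application of Lemma~\ref{lem:generic_pac} with the algorithm's sample size. Your added remark on reading the VC bound uniformly over the class to handle the data-dependence of the returned concept is a sound clarification of a point the paper leaves implicit, but it does not change the argument.
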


\begin{proof}
Let
$\mathcal{C}=\text{Pos}(\R[x]_{2m}^n)$,
and 
$\ell(c,x)=\ind{x\notin c}$.
Note that the set 
$\{x\in\R^n : C(x) \le \alpha\}$ 
is a member of
$\text{Pos}(\R[x]_{2m}^n)$, 
since it can be expressed as 
$c=\{x\in\R^n : \alpha - C(x) \ge 0\}$.
Since the dimension of $\R[x]_{2m}^n$ is $\binom{n+2m}{n}$, the
VC dimension of 
$\text{Pos}(\R[x]_{2m}^n)$ 
is $\le\binom{n+2m}{n}=d$ by Lemma~\ref{lem:pos_vc}. 
For 
$\ell(c,x)=\ind{x\notin c}$,
the statistical risk is 
$r(c)=\Ex{\ind{x\notin c}}=1-P_X(c)$,
and its empirical counterpart is
$\hat{r}(c)=\sum_{i=1}^N \ind{x_i\notin c}$.
The empirical risk is zero for any set $c$ that encloses 
$x_1,\dotsc,x_N$. The set 
$\{x\in\R^n : C(x) \le \alpha\}$ 
encloses 
$x_1,\dotsc,x_N$ 
by construction, meaning that 
$\hat{r}(\{x\in\R^n : C(x) \le \alpha\})=0$.
By applying Lemma~\ref{lem:generic_pac} for this choice of $\mathcal{C}$,
$\ell$, and $m$, we find that if
$
N \ge
    \frac{5}{\epsilon}\left(
    \log\frac{4}{\delta} + \binom{n+2m}{n} \log\frac{40}{\epsilon}
\right),
$
then
$P_X^N\left(\{x_1,\dotsc,x_N\} : 1-P_X(\{x\in\R^n : C(x) \le \alpha\}) \le \epsilon\}\right) \ge 1-\delta$.
Since Algorithm~\ref{alg:cfun_classical} selects $N$ to be the smallest integer
such that
$
N \ge
    \frac{5}{\epsilon}\left(
    \log\frac{4}{\delta} + \binom{n+2m}{n} \log\frac{40}{\epsilon}
\right),
$
it follows that the stated PAC bound holds for the output of
Algorithm~\ref{alg:cfun_classical}.
\end{proof}

\subsection{Bayesian PAC Analysis}%
\label{sub:bayesian_pac_analysis_of_christoffel_function_estimators} 

\quad

\begin{wrapfigure}{R}{0.5\textwidth}
\begin{minipage}[t][11cm][t]{0.46\textwidth}
    \begin{algorithm}[H]
        \caption{To estimate a support set by a kernelized empirical inverse
        Christoffel function satisfying a Bayesian PAC bound.}\label{alg:cfun_pacbayes}
        \begin{algorithmic} 
        \STATE inputs: random variable $X$ with support in $\dom$; positive definite kernel function $k$; PAC parameters
        $\epsilon,\delta\in(0,1)$; noise parameter $\sigma_0^2\in\R_{++}$;
        initial sample size $N_0$; batch size $N_b$; threshold $\eta$.
        \STATE $N\gets N_0$
        \STATE $D\gets (x_1,\dotsc,x_{N})\simiid X$
        \STATE $i\gets0$
        \STATE $\epsilon^0\gets 1$
        \WHILE {$\epsilon^i > \epsilon$}
            \STATE $i\gets i + 1$
            \STATE append \\ $(x_{N+1},\dotsc,x_{N+N_b})\simiid X$ to $D$
            \STATE $N\gets N+N_b$
            \STATE $K_{\sigma_0}\gets\sigma_0^2I + K$
            \STATE define $C:\dom\to\R_{+}$ to be
            $C(x)=k(x,x)-k_D(x)K_{\sigma_0}^{-1}k_D(x)$;
            \STATE Evaluate $\overline{r}$ as in~\eqref{eq:pacbayes_kernel}
            \STATE 
            $\epsilon_i \gets \frac{\bar r + \frac{2}{N}\log (\frac{\pi^2i^2}{6
            \delta})}{1-F_1(1)}$, $F_1$ as in \eqref{eq:f1_def}
        \ENDWHILE 
        \RETURN $\ind{C(x)\le\eta}$
        \end{algorithmic}
    \end{algorithm}
\end{minipage}
\end{wrapfigure}

Bayesian PAC analysis bounds the deviation of
the expected values of the true and empirical risks with respect to a
data-dependent probability measure. 
Given a
prior measure $P$ over $\cclass$ and a posterior measure $Q$ derived from the
prior and the observations, we define the expected risk $\srisk = \Ex{\ell(c, X)}$ and
empirical expected risk $\esrisk=\Ex{\avgion \ell(c, x_i)}$
where $c\sim Q$.
Equivalently, $P$ and $Q$ define random variables $C_P$, $C_Q$ supported on
$\cclass$, called the
prior and posterior \emph{stochastic estimators}: $\srisk$ and $\esrisk$ are the
true and empirical risks of $C_Q$.
A Bayesian PAC bound is a bound on the deviation between $\srisk$ and $\esrisk$.
Bayesian PAC bounds can be used to provide an error bound for a
single classifier which captures the central behavior of $Q$, which we call the
\emph{central concept} and denote as $\bar{c}_Q$. 
To verify that Algorithms~\ref{alg:cfun_pacbayes}
provides a valid solution to
Problem~\ref{prb:pac_support}, we show that 
its output is the central concept of a posterior stochastic estimator and use a
Bayesian PAC bound to show that a bound of the 
form~\eqref{eq:prob_pac_bound} holds.

The most common tool to construct Bayesian PAC bounds is the
PAC-Bayes theorem developed by McAllester~\cite{mcallester1999some}, 
Seeger~\cite{seeger2002pac} and others~\cite{langford2005tutorial}. 
We use the variation due to Seeger. 
This theorem assumes that the concept class 
admits a parameterization which can be infinite-dimensional.
\begin{theorem}[PAC-Bayes Theorem, adapted from~\cite{seeger2002pac,langford2005tutorial}]
    \label{thm:pacbayes}
    Consider a concept class $\cclass$ admitting a parametrization by
    $w\in\mathcal{W}$.
    Let the loss function be zero-one valued, that is
    $\ell:\cclass\times\dom\to\{0,1\}$.
    The following bound holds for all measures $P$, $Q$ over the concept class $\cclass$
    defined by measures $W_P$ and $W_Q$ over $\mathcal{W}$
    such that $W_Q$ is absolutely continuous with respect to $W_P$:
    \begin{equation}
        \label{eq:pacbayes}
        P_X^N\left( \left\{x_1,\dotsc,x_N:
            D_{ber}(\esrisk || \srisk) 
            \le 
            \frac{
                D_{KL}(W_Q || W_P) + \log\frac{N+1}{\delta}
            }{N}
        \right\}\right) \ge 1 - \delta.
    \end{equation}
\end{theorem}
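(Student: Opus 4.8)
The plan is to use the standard change-of-measure argument underlying all PAC-Bayes bounds, reducing the theorem to a single-concept concentration inequality for the empirical risk. Throughout I assume $W_Q\ll W_P$, since otherwise $D_{KL}(W_Q\|W_P)=+\infty$ and the bound is vacuous; under this assumption the parametrization lets me perform the change of measure directly on $\mathcal{W}$, identifying expectations over $Q$ (resp.\ $P$) with those over the stochastic estimator $C_Q$ (resp.\ $C_P$). The technical heart is a moment bound for a \emph{fixed} concept $c\in\cclass$: since $\ell$ is $\{0,1\}$-valued and $x_1,\dots,x_N\simiid X$, the quantity $N\erisk[c]$ is Binomial$(N,\risk[c])$, so the relative-entropy Chernoff estimate (the method of types) gives $P_X^N(\erisk[c]=k/N)\le\exp(-N\,D_{ber}(\tfrac{k}{N}\|\risk[c]))$ for every $k\in\{0,\dots,N\}$; summing the $N+1$ possibilities yields
\[
    \Ex[P_X^N]{\exp\!\left(N\,D_{ber}(\erisk[c]\,\|\,\risk[c])\right)}\ \le\ N+1 .
\]
Because this holds for every $c$ and $P$ does not depend on the sample, Tonelli's theorem (all integrands nonnegative) lets me exchange the two expectations, giving $\Ex[P_X^N]{\Ex[C_P]{\exp(N\,D_{ber}(\erisk[C_P]\,\|\,\risk[C_P]))}}\le N+1$.

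Next I apply Markov's inequality to the nonnegative, data-dependent random variable $\Ex[C_P]{\exp(N\,D_{ber}(\erisk[C_P]\,\|\,\risk[C_P]))}$, whose mean is at most $N+1$: with $P_X^N$-probability at least $1-\delta$ it is bounded by $(N+1)/\delta$. On that event I invoke the Donsker--Varadhan variational inequality $\Ex[C_Q]{f(C_Q)}\le D_{KL}(W_Q\|W_P)+\log\Ex[C_P]{e^{f(C_P)}}$ with $f(c)=N\,D_{ber}(\erisk[c]\,\|\,\risk[c])$, obtaining $\Ex[C_Q]{N\,D_{ber}(\erisk[C_Q]\,\|\,\risk[C_Q])}\le D_{KL}(W_Q\|W_P)+\log\frac{N+1}{\delta}$. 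Finally, $D_{ber}(\cdot\|\cdot)$ is jointly convex in its two arguments, so Jensen's inequality gives, recalling $\esrisk=\Ex[C_Q]{\erisk[C_Q]}$ and $\srisk=\Ex[C_Q]{\risk[C_Q]}$,
\[
    D_{ber}(\esrisk\,\|\,\srisk)=D_{ber}\!\left(\Ex[C_Q]{\erisk[C_Q]}\,\middle\|\,\Ex[C_Q]{\risk[C_Q]}\right)\ \le\ \Ex[C_Q]{D_{ber}(\erisk[C_Q]\,\|\,\risk[C_Q])} .
\]
Chaining the last two displays and dividing by $N$ produces the asserted bound on an event of $P_X^N$-probability at least $1-\delta$.

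The main obstacle is the single-concept moment bound: it is the step that trades a guarantee holding uniformly over the (possibly infinite-dimensional) concept class for one controlled by a single KL term, and it relies on the tightness of the method-of-types estimate for binomial tails. Care is required that $D_{ber}$ denotes the full Bernoulli KL divergence --- nonnegative and order-sensitive --- so that the estimate is genuinely two-sided in $\erisk[c]$ versus $\risk[c]$, with the degenerate cases $\risk[c]\in\{0,1\}$ handled by the conventions $0\log 0=0$ and $D_{ber}(p\|0)=+\infty$ for $p>0$ (both of which make the bound hold trivially there). The remaining work is measure-theoretic bookkeeping: checking that Tonelli applies and noting that $W_Q\ll W_P$ is exactly what legitimizes the Donsker--Varadhan step, and that without it the right-hand side is already $+\infty$.
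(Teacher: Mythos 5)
The paper does not actually prove Theorem~3.4: it is imported, as stated, from Seeger's and Langford's work, so there is no internal proof to compare against. Your argument is a correct reconstruction of the standard proof of precisely this Seeger--Langford form of the bound: the method-of-types estimate $\mathbb{P}(\hat r(c)=k/N)\le\exp(-N\,D_{ber}(k/N\,\|\,r(c)))$ summed over the $N+1$ values of $k$ to get the moment bound $\mathbb{E}\bigl[\exp\bigl(N\,D_{ber}(\hat r(c)\,\|\,r(c))\bigr)\bigr]\le N+1$ for a fixed concept, Tonelli to average against the data-independent prior, Markov's inequality, the Donsker--Varadhan change of measure (the step where $W_Q\ll W_P$ is genuinely needed), and joint convexity of the Bernoulli KL with Jensen to pass from the averaged divergence to $D_{ber}(\hat r_Q\,\|\,r_Q)$; this is exactly where the $\log\frac{N+1}{\delta}$ term originates, and your handling of the degenerate cases $r(c)\in\{0,1\}$ is fine. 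One point worth making explicit: the high-probability event produced by the Markov step depends only on the prior $P$, so the inequality obtained after the change of measure holds simultaneously for all posteriors $Q$ on that single event. That uniformity is what legitimizes the paper's subsequent use of the theorem with posteriors constructed from the very same data (Lemmas~3.6 and~3.9), and your proof delivers it even though the theorem statement reads as if $Q$ were fixed in advance; stating this explicitly would make the proof fully aligned with how the result is used.
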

Here, $D_{KL}(W_P||W_Q)$ denotes the Kullback-Leibler (KL) divergence between
$W_P$ and $W_Q$, and $D_{ber}(q||p)$ denotes the KL divergence between two
Bernoulli distributions with parameters $q$ and $p$, given by the formula
$D_{ber}(q||p) = q\log\frac{q}{p} + (1-q)\log\frac{1-q}{1-p}$.
For a given set of data $x_1,\dotsc,x_N$, confidence parameter $\delta$, 
and a prior measure $P$ chosen independently of the data, 
the inequality~\eqref{eq:pacbayes} provides a family of Bayesian PAC bounds, one
for each posterior measure $Q$.

We use the PAC-Bayes theorem in the proof of
Theorem~\ref{thm:alg_cfun_pacbayes}, 
which asserts the validity of
Algorithm~\ref{alg:cfun_pacbayes}.
First, we construct prior and
posterior stochastic estimators $C_P$ and $C_Q$, corresponding to measures $P$,
$Q$ over a concept class,
which admit a sublevel set of the empirical inverse Christoffel function as a
central concept; namely
$\bar{c}_Q=\{x:\kappa^{-1}(x) \le\eta\}$ for a given positive $\eta$. 
Next, we express a formula to compute the empirical stochastic risk $\esrisk$
of $C_Q$ from the data. 
Then, we establish a bound on the true stochastic $\srisk$ in terms of $\esrisk$ using the PAC-Bayes theorem.
Finally, we prove a bound on the true risk $r(\bar{c}_Q)$ of the central concept
in terms of $\srisk$. This sequence of bounds combines to yield
a bound of the
form~\eqref{eq:prob_pac_bound} computable in terms of known data. 

\begin{theorem}
    \label{thm:alg_cfun_pacbayes}
    Denote $C^i$ as the inverse Christoffel function constructed during the $i$th
    iteration of Algorithm 3.2. We have the following PAC bound on all the inverse
    Christoffel functions constructed during the algorithm:
    \begin{equation*}
        \begin{aligned}
        \label{eq:pac_epsi}
        &\mathbb P(\forall i\geq1,\, P_X(\{x:C^i(x) \leq\eta\}) \geq 1-\epsilon^i)\\
        &\quad \geq 1-\delta.
        \end{aligned}
    \end{equation*}
    Thus, with confidence $\delta$, upon the termination condition of Algorithm
    3.2, we are left with a support set estimate of probability mass $\geq 1- \epsilon$.
\end{theorem}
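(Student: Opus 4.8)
The plan is to assemble the four-step argument outlined in the paragraph preceding the theorem, threading a union bound over iterations through a PAC-Bayes bound applied at each iteration. First I would fix, once and for all, a prior stochastic estimator $C_P$ independent of the data: following the connection to Gaussian process regression alluded to in Section~\ref{sub:bayesian_pac_analysis_of_christoffel_function_estimators}, I take $C_P$ to be the random function associated with the GP prior $\mathcal{GP}(m,k)$ (so that $\Ex{C_P(x)} = m(x)$ and the prior "concept" at level $\eta$ is a superlevel-type set), and the posterior $C_Q$ to be the one associated with the GP posterior after conditioning on the data, whose posterior variance at $x$ is exactly the kernelized empirical inverse Christoffel function $\kappa^{-1}(x) = k(x,x) - k_D(x)^\top(\sigma_0^2 I + K)^{-1}k_D(x)$ from~\eqref{eq:kernel_cfun_def}. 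The central concept of $Q$ is then $\bar c_Q = \{x : \kappa^{-1}(x) \le \eta\}$, which is precisely the set returned by Algorithm~\ref{alg:cfun_pacbayes}. For the loss I use $\ell(c,x) = \ind{x \notin c}$ as in the classical case, so that $r(\bar c_Q) = 1 - P_X(\bar c_Q)$, and $r_Q$, $\hat r_Q$ are the true and empirical stochastic risks of $C_Q$.

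Next I would carry out the three estimates in order. (i) \emph{Computing $\hat r_Q$ and relating it to $\bar r$}: for the GP-based stochastic estimator, the probability (over $c \sim Q$) that a fixed sample point $x_j$ is misclassified can be written in closed form in terms of the Gaussian posterior marginal at $x_j$, and averaging over $j$ gives a formula involving $F_1$, the chi-square-with-one-degree-of-freedom CDF; this is the quantity $\bar r$ evaluated via~\eqref{eq:pacbayes_kernel} in the algorithm. The identity $\hat r_Q = \bar r$ (or a clean inequality between them) is where the specific parametric form of $Q$ is used. (ii) \emph{PAC-Bayes step}: apply Theorem~\ref{thm:pacbayes} with this $P$, $Q$ at sample size $N$, using that $D_{KL}(W_Q \| W_P)$ for two Gaussians is finite and in fact the data-dependent term that feeds into $\epsilon^i$; invert the Bernoulli-KL bound $D_{ber}(\hat r_Q \| r_Q) \le (\cdots)/N$ to get an explicit upper bound on $r_Q$. (iii) \emph{From $r_Q$ to $r(\bar c_Q)$}: here I would use a Markov/Chebyshev-type inequality — since the central concept is the set where the posterior variance is below $\eta$, a point outside $\bar c_Q$ has posterior variance $> \eta$, and the stochastic estimator misclassifies it with probability bounded below by a function of $\eta$; concretely $r(\bar c_Q) \le r_Q / (1 - F_1(1))$ after the appropriate normalization, which is exactly the denominator appearing in the definition of $\epsilon_i$ in Algorithm~\ref{alg:cfun_pacbayes}. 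Chaining (i)–(iii) shows that, at a fixed iteration $i$ with its sample size $N_i$, the bound $P_X(\{x : C^i(x) \le \eta\}) \ge 1 - \epsilon^i$ holds with probability at least $1 - \delta_i$ for the appropriate slack $\delta_i$.

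Finally I would handle the "for all $i$" quantifier. Because the prior $P$ is fixed independently of all the data and the bound of Theorem~\ref{thm:pacbayes} holds for that prior at \emph{every} sample size, I can instantiate~\eqref{eq:pacbayes} separately for each iteration $i$ with confidence parameter $\delta_i = \tfrac{6\delta}{\pi^2 i^2}$; since $\sum_{i\ge 1} \tfrac{6}{\pi^2 i^2} = 1$, a union bound over $i$ gives that all the per-iteration accuracy statements hold simultaneously with probability at least $1 - \delta$. Tracking constants shows the resulting per-iteration $\epsilon^i$ is exactly the $\epsilon_i$ computed in the algorithm (the $\log(\pi^2 i^2/6\delta)$ term is the $-\log\delta_i$ contribution, and the factor $2/N$ versus $1/N$ absorbs a constant from bounding $D_{ber}$ below by a quadratic near its diagonal). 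The termination condition $\epsilon^i \le \epsilon$ then yields the claimed $P_X(\bar c_Q) \ge 1 - \epsilon$ with confidence $\delta$. The main obstacle I anticipate is step (iii): getting a clean, non-vacuous bound on the true risk of the single central concept from the stochastic risk of the randomized estimator requires that the posterior marginal variances not be too degenerate, and the factor $1 - F_1(1)$ must be shown to be the right constant — this is the one place where the geometry of "sublevel set of a variance function" interacts nontrivially with the Gaussian randomization, and care is needed that the inequality goes in the useful direction for \emph{every} $x \notin \bar c_Q$, not merely on average.
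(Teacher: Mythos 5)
Your proposal is correct and follows essentially the same route as the paper: the same chain of lemmas (the $F_1$-based formula for the empirical stochastic risk, inversion of the Bernoulli-KL PAC-Bayes bound to get $\overline{r}$, and the $1/(1-F_1(1))$ factor relating the central concept's risk to the stochastic risk), combined with the same union bound over iterations using confidence shares $6\delta/(\pi^2 i^2)$. The only quibble is a minor mislabeling in your step (i) — the $F_1$ average is $\esrisk$, while $\overline{r}$ is the KL-inverted bound on $\srisk$ — which your step (ii) already corrects.
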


In addition to verifying the validity of the terminal output of
Algorithm~\ref{alg:cfun_pacbayes}, Theorem~\ref{thm:alg_cfun_pacbayes} justifies
the use of Algorithm~\ref{alg:cfun_pacbayes} in an ``any time algorithm''
fashion, that is as an algorithm whose output is verified even if execution is
stopped prematurely. The execution of Algorithm~\ref{alg:cfun_pacbayes} will terminate as long as the growth of 
$D_{KL}(\Norm{0, (\sigma_0^{-1}I + K^{-1})^{-1}}||\Norm{0,K})$ is $o(N)$:
determining the conditions under which this growth condition holds is a topic
for future research.
 
We now develop the constructions used in the proof, starting with the
prior and posterior stochastic estimators for the kernel case.
We take
\begin{equation}
    \label{eq:kernel_prior_posterior_families}
    \begin{aligned}
        C_{P}
        &= \{x : g_p(x)^2 \le \eta\},
        & C_{Q} 
        &= \{x : g_q(x)^2 \le \eta\},
    \end{aligned}
\end{equation}
where $g_p$ and $g_q$ are the prior and posterior of a general Gaussian process
regression model with prior kernel $k$, conditioned on the observations
$x_1,\dotsc,x_N$, $y_1=\dotso=y_N=0$ with observation noise level $\sigma_0^2$.
The corresponding concept class is the class of $\eta$-sublevel sets of
functions in the support of $g_p$, which depends on the choice of kernel.
According to~\eqref{eq:gp_kernel}, $g_q$ has
posterior mean $m_q=0$ and variance
\begin{equation}
        \Var[g_q]{x}= k(x,x) - k(X,x)^\top{\left(\sigma^2 I_N + K(X,X)\right)}^{-1} k(X,x).
\end{equation}
We take the posterior central concept to be
$\bar{c}_Q=\{x : \Ex{g_q(x)^2} \le\eta\}$. 
Since $\Ex{g_q(x)}=m_q(x)=0$ for all $x\in\dom$, we know
$\Ex{g_q(x)^2}=\Var[g_q]{x}$. This means that the posterior central concept is
\begin{equation}
    \begin{aligned}
        \label{eq:kernel_central_concept}
        \bar{c}_{Q} &= \{x :
            k(x,x) - k(X,x)^\top{\left(\sigma^2 I_N + K(X,X)\right)}^{-1} k(X,x) 
            \le \eta
        \}
        = \{x : \kappa^{-1}(x) \le \eta\}
    \end{aligned}
\end{equation}
as desired.

Next, we construct the sequence of bounds, starting with the formula for the empirical
stochastic risk of $C_Q$ in terms of known data. 
\begin{lemma}
    \label{lem:empirical_stochastic_risk}
    For the zero-one membership loss $\ell(c,x)=\ind{x\notin c}$, the empirical
    stochastic risk of the posterior stochastic
    estimators $C_Q$ defined in~\eqref{eq:kernel_prior_posterior_families} is
    \begin{equation}
    \label{eq:poly_empirical_stochastic_risk}
        \esrisk
        =
        \avgion 1-F_1\left(\frac{\eta}{\kappa^{-1}(x_i)}\right),
    \end{equation}
    where $F_1$ is the CDF of the chi-square distribution with one degree of
    freedom, that is 
    \begin{equation}
        \label{eq:f1_def}
        F_1(x)=\Pr{Z^2 \le x} \text{ where } Z\sim\Norm{0,1}.
    \end{equation}
\end{lemma}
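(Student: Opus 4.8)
The plan is to compute the empirical stochastic risk directly from its definition by exploiting the Gaussian-process structure of the posterior stochastic estimator $C_Q$. Recall that for the zero-one membership loss $\ell(c,x) = \ind{x\notin c}$ we have $\hat r_Q = \Ex[c\sim Q]{\avgion \ell(c,x_i)} = \avgion \Pr[c\sim Q]{x_i\notin c}$ by linearity of expectation. So the entire task reduces to computing, for each fixed data point $x_i$, the probability that $x_i$ fails to lie in a randomly drawn concept $c\sim Q$. Since $C_Q = \{x : g_q(x)^2 \le \eta\}$ where $g_q$ is the posterior Gaussian process, the event $\{x_i \notin C_Q\}$ is exactly the event $\{g_q(x_i)^2 > \eta\}$, where the randomness is over draws of the GP posterior.

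The key step is then a one-dimensional Gaussian tail computation. For a fixed point $x_i$, the marginal $g_q(x_i)$ is a univariate Gaussian with mean $m_q(x_i) = 0$ and variance $\Var[g_q]{x_i} = k(x_i,x_i) - k(X,x_i)^\top(\sigma^2 I_N + K(X,X))^{-1}k(X,x_i) = \kappa^{-1}(x_i)$, using the posterior variance formula already recorded in the excerpt together with the identification in~\eqref{eq:kernel_central_concept}. Writing $g_q(x_i) = \sqrt{\kappa^{-1}(x_i)}\, Z$ with $Z\sim\Norm{0,1}$, we get
\begin{equation*}
    \Pr[c\sim Q]{x_i \notin c} = \Pr{g_q(x_i)^2 > \eta} = \Pr{Z^2 > \frac{\eta}{\kappa^{-1}(x_i)}} = 1 - F_1\!\left(\frac{\eta}{\kappa^{-1}(x_i)}\right),
\end{equation*}
where $F_1$ is the CDF of $\chi^2_1$ as defined in~\eqref{eq:f1_def}. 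Averaging over $i = 1,\dotsc,N$ gives exactly~\eqref{eq:poly_empirical_stochastic_risk}. One should note that this requires $\kappa^{-1}(x_i) > 0$ so the rescaling is legitimate; this holds because $\sigma_0^2 > 0$ makes the relevant matrix positive definite, so the posterior variance is strictly positive (a degenerate point with $\kappa^{-1}(x_i) = 0$ would force $g_q(x_i) = 0$ a.s., and then $x_i \in c$ a.s. as long as $\eta > 0$, consistent with the convention $F_1(+\infty) = 1$).

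I expect the computation itself to be entirely routine; the only real subtlety — and the place where care is needed — is making sure the two appearances of the posterior variance are genuinely the same object. That is, one must confirm that the quantity $\Var[g_q]{x}$ appearing in the GP posterior equals the kernelized inverse Christoffel function $\kappa^{-1}(x)$ of~\eqref{eq:kernel_cfun_def} under the correspondence $\sigma^2 = \sigma_0^2$ (or whichever scaling convention is in force), so that the formula can legitimately be stated in terms of $\kappa^{-1}$. This is established by~\eqref{eq:kernel_central_concept} in the excerpt, so no new work is needed, but the proof should cite it explicitly. A secondary point worth a sentence is that the loss is zero-one valued, so $\Ex[c\sim Q]{\ell(c,x_i)}$ is literally the probability $\Pr[c\sim Q]{\ell(c,x_i) = 1} = \Pr[c\sim Q]{x_i\notin c}$, justifying the passage from expectation to probability.
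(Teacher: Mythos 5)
Your proposal is correct and follows essentially the same route as the paper's own proof: identify the marginal of the posterior Gaussian process at each data point as a centered Gaussian with variance $\kappa^{-1}(x_i)$, standardize to reduce the exclusion probability to a $\chi^2_1$ tail $1-F_1(\eta/\kappa^{-1}(x_i))$, and average over the data. If anything, your write-up is slightly tidier, since you consistently work with the posterior process $g_q$ (the paper's proof writes $g_p$ where it means the posterior) and you note the strict positivity of $\kappa^{-1}(x_i)$ needed for the rescaling.
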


Next, we use the PAC-Bayes theorem to bound the stochastic risk $\srisk$ by the
empirical stochastic risk $\esrisk$. 

\begin{lemma}
    \label{lem:sriskbound_kernel}
   Let $x_1,\dotsc,x_N\simiid X$ denote a set of observations used to construct
   $C_{Q}$ from $C_{P}$
   in~\eqref{eq:kernel_prior_posterior_families}. The stochastic risk $\srisk$
   is bounded by $\overline{r}\in(0,1)$, 
   where
   \begin{equation}
       \label{eq:pacbayes_kernel}
       \overline{r}=\sup
       \left\{ \beta :
       D_{\text{ber}}(\esrisk || \beta)
       \le
       \frac{D_{KL}(\Norm{0,(K^{-1}+\sigma_0^{-2}I)^{-1}} || \Norm{0,K})
       + \log\frac{N+1}{\delta}
       }{N}\right\},
   \end{equation}
   with confidence $1-\delta$.
\end{lemma}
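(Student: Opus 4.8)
The plan is to instantiate the PAC-Bayes theorem (Theorem~\ref{thm:pacbayes}) with the prior and posterior stochastic estimators $C_P$, $C_Q$ of~\eqref{eq:kernel_prior_posterior_families}, evaluate the KL term explicitly, and then invert the resulting kl-type inequality to read off an upper bound on $\srisk$. The parametrization $w\in\mathcal W$ of the concept class is the Gaussian-process sample path $g$, so $W_P$ is the law of the prior process $g_p\sim\mathcal{GP}(0,k)$ and $W_Q$ is the law of the posterior process $g_q$ obtained by conditioning on the pseudo-observations $y_1=\dots=y_N=0$ at $x_1,\dots,x_N$ with noise level $\sigma_0^2$. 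Since this posterior is the prior reweighted by the strictly positive, integrable Gaussian likelihood of the pseudo-data, $W_Q$ is absolutely continuous with respect to $W_P$, and Theorem~\ref{thm:pacbayes} applies: with $P_X^N$-probability at least $1-\delta$,
\[ D_{ber}(\esrisk\,\|\,\srisk)\le \frac{D_{KL}(W_Q\|W_P)+\log\frac{N+1}{\delta}}{N}, \]
where $\esrisk$ is the explicit data quantity furnished by Lemma~\ref{lem:empirical_stochastic_risk}.

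The main work is to show $D_{KL}(W_Q\|W_P)$ collapses to the finite-dimensional numerator appearing in~\eqref{eq:pacbayes_kernel}. The Radon--Nikodym derivative $dW_Q/dW_P$ is the normalized pseudo-data likelihood, which depends on a path $g$ only through the vector of values $g(X):=(g(x_1),\dots,g(x_N))$; the same reweighting applied to the prior marginal law of $g(X)$ yields the posterior marginal law of $g(X)$, so $dW_Q/dW_P(g)$ coincides with $\frac{d\,\mathrm{law}_Q(g(X))}{d\,\mathrm{law}_P(g(X))}(g(X))$. Taking $\mathbb{E}_{g\sim W_Q}$ of its logarithm then gives $D_{KL}(W_Q\|W_P)=D_{KL}(\mathrm{law}_Q(g(X))\,\|\,\mathrm{law}_P(g(X)))$. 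Both marginals are Gaussian: under the prior, $g(X)\sim\Norm{0,K}$; under the GP-regression posterior with zero observations, $g(X)$ has mean $0$ and covariance $K-K(\sigma_0^2I+K)^{-1}K$, which by the matrix inversion lemma (used already in~\eqref{eq:empirical_moment_matrix_mil}) equals $(K^{-1}+\sigma_0^{-2}I)^{-1}$. Hence $D_{KL}(W_Q\|W_P)=D_{KL}(\Norm{0,(K^{-1}+\sigma_0^{-2}I)^{-1}}\,\|\,\Norm{0,K})$, exactly the numerator in~\eqref{eq:pacbayes_kernel}. I expect this measure-theoretic reduction---justifying that the process-level KL divergence equals the KL divergence of the finite marginals at the data points, with due care for Gaussian measures on infinite-dimensional path spaces---to be the main obstacle; an alternative that avoids it is to work in the feature/weight space of $k$, where the posterior covariance operator differs from the prior's only on the finite-dimensional span of the features $\{\phi(x_i)\}$, and compute the KL there.

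It remains to convert the kl inequality into the stated bound $\srisk\le\overline r$. Write $B$ for the right-hand side of~\eqref{eq:pacbayes_kernel}, which is finite and strictly positive. As a function of its second argument, $\beta\mapsto D_{ber}(\esrisk\|\beta)$ is continuous and convex, vanishes at $\beta=\esrisk$, is strictly increasing on $[\esrisk,1)$, and tends to $+\infty$ as $\beta\to 1^-$ (using $\esrisk<1$, immediate from~\eqref{eq:poly_empirical_stochastic_risk}). Therefore $\{\beta\in(0,1):D_{ber}(\esrisk\|\beta)\le B\}$ is a nonempty interval with a finite right endpoint, which is precisely the quantity $\overline r$ defined in~\eqref{eq:pacbayes_kernel}, and $B>0$ forces $\overline r>\esrisk\ge 0$ while $\esrisk<1$ forces $\overline r<1$, so $\overline r\in(0,1)$. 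On the probability-$\ge 1-\delta$ event of the PAC-Bayes theorem we have $D_{ber}(\esrisk\|\srisk)\le B$, so $\srisk$ lies in that interval and in particular $\srisk\le\overline r$, which is the claim.
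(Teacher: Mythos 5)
Your proposal is correct and follows essentially the same route as the paper's proof: apply Seeger's PAC-Bayes theorem to the GP prior and posterior, reduce the process-level KL divergence to the KL divergence between the Gaussian marginals at the data points, compute the posterior covariance $K-K(\sigma_0^2I+K)^{-1}K=(K^{-1}+\sigma_0^{-2}I)^{-1}$ via the matrix inversion lemma, and invert the Bernoulli-KL inequality to extract $\overline{r}$. The only difference is that you sketch the Radon--Nikodym argument for the KL reduction (and spell out the inversion step) where the paper simply cites Seeger's result from Section 2.2 of~\cite{seeger2002pac} and defers the inversion argument to the proof of Lemma~\ref{lem:sriskbound_poly}.
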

Since $D_{ber}(q||p)$ is convex in $(q,p)$ and equal to zero for $q=p$, the set
in~\eqref{eq:pacbayes_kernel} is an interval containing $\esrisk$.
Once $\esrisk$ and the right-hand side of the inequality
in~\eqref{eq:pacbayes_kernel} are evaluated, the supremum $\overline{r}$ can be
computed using a scalar root-finding procedure to solve
$
D_{\text{ber}}(\esrisk || \beta)
-
(D_{KL}(\Norm{0,(K^{-1}+\sigma_0^{-2}I)^{-1}} || \Norm{0,K})
+ \log\frac{N+1}{\delta}
)/N = 0
$ over the interval $\beta\in[\esrisk,1)$.

Finally, we relate the statistical risk of $r(\bar{c}_Q)$ to $\srisk$.

\begin{lemma}
    \label{lem:central_concept_risk_bound}
    The statistical risk $\risk[\bar{c}_\eta]$ of the posterior central concept
    and the stochastic risk $\srisk$ of the posterior stochastic estimator
    satisfy the
    bound
        $
        r(\bar{c}_Q) \le \frac{1}{1-F_1(1)}\srisk\approx 3.15\srisk.
        $
\end{lemma}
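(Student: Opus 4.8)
The plan is to write both $r(\bar{c}_Q)$ and $\srisk$ as averages against $P_X$ of one and the same monotone function of $\kappa^{-1}(x)$, and then compare them by a one-line Markov-type estimate. First I would unwind the definitions for the membership loss $\ell(c,x)=\ind{x\notin c}$. By~\eqref{eq:kernel_central_concept} the posterior central concept is $\bar{c}_Q = \{x : \kappa^{-1}(x)\le\eta\}$ (the set written $\bar{c}_\eta$ in the statement), so
\[
    r(\bar{c}_Q) = \Ex[x\sim X]{\ind{x\notin\bar{c}_Q}} = P_X(\{x : \kappa^{-1}(x) > \eta\}).
\]
For the stochastic risk, I would use the tower property (legitimate since the integrand is bounded by $1$):
\[
    \srisk = \Ex[x\sim X,\,c\sim Q]{\ind{x\notin c}} = \Ex[x\sim X]{\Pr[c\sim Q]{x\notin c}},
\]
where the inner probability is taken with $x$ held fixed. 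With $C_Q = \{x : g_q(x)^2\le\eta\}$ from~\eqref{eq:kernel_prior_posterior_families}, the event $x\notin C_Q$ is $\{g_q(x)^2 > \eta\}$, and since $g_q(x)\sim\Norm{0,\Var[g_q]{x}}=\Norm{0,\kappa^{-1}(x)}$ pointwise, the ratio $g_q(x)^2/\kappa^{-1}(x)$ is $\chi^2_1$-distributed. Hence $\Pr[c\sim Q]{x\notin c} = 1 - F_1(\eta/\kappa^{-1}(x))$ by~\eqref{eq:f1_def} --- this is the population analogue of the computation behind Lemma~\ref{lem:empirical_stochastic_risk} and~\eqref{eq:poly_empirical_stochastic_risk} --- and therefore $\srisk = \Ex[x\sim X]{1 - F_1(\eta/\kappa^{-1}(x))}$.

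The comparison step is then pointwise. On the event $\{x : \kappa^{-1}(x) > \eta\}$, which is exactly the complement of $\bar{c}_Q$, the argument $\eta/\kappa^{-1}(x)$ lies in $[0,1)$, so by monotonicity of the CDF $F_1$ we have $1 - F_1(\eta/\kappa^{-1}(x)) \ge 1 - F_1(1) > 0$. Since the integrand $1 - F_1(\eta/\kappa^{-1}(x))$ is nonnegative everywhere, I would discard it on $\bar{c}_Q$ and apply this lower bound on the complement, giving
\[
    \srisk \ge (1-F_1(1))\, P_X(\{x : \kappa^{-1}(x) > \eta\}) = (1-F_1(1))\, r(\bar{c}_Q),
\]
which rearranges to $r(\bar{c}_Q) \le \srisk/(1-F_1(1))$. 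For the numerical constant, note that for $Z\sim\Norm{0,1}$ we have $1 - F_1(1) = \Pr{Z^2 > 1} = \Pr{|Z| > 1}\approx 0.3173$, so $1/(1-F_1(1))\approx 3.15$.

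There is no serious obstacle here: once both risks are written as averages of the same monotone function of $\kappa^{-1}(x)$ --- one over all of $\dom$, the other effectively only over the complement of the sublevel set $\bar{c}_Q$ --- the bound is immediate. The only points deserving a word of care are the tower-property interchange used to express $\srisk$ as an $x$-average of the pointwise Gaussian tail probability (immediate from boundedness), and the direction of the final inequality, which relies only on $F_1$ being nondecreasing (true of any CDF) together with the strict positivity $F_1(1) < 1$ that keeps the multiplicative constant finite.
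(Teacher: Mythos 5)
Your proof is correct and follows essentially the same route as the paper: both arguments split the stochastic risk $\srisk$ over $P_X$ according to membership in $\bar{c}_Q$, lower-bound the pointwise exceedance probability by $1-F_1(1)$ on the complement of the central concept (your $1-F_1(\eta/\kappa^{-1}(x))\ge 1-F_1(1)$ via monotonicity of $F_1$ is the same computation the paper phrases as comparing the events $\{g(x)^2>\eta\}$ and $\{g(x)^2>\Ex{g(x)^2}\}$), discard the remaining nonnegative term, and rearrange. Your write-up is in fact slightly cleaner, making explicit the tower-property step and the chi-square identity that the paper leaves implicit.
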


When combined, the sequence of bounds, the sequence of bounds above provide a
bound of the form~\eqref{eq:prob_pac_bound} that holds independently for each
iteration of Algorithm~\ref{thm:alg_cfun_pacbayes}. Applying a union bound
argument to provide a guarantee that holds uniformly over iterations
forms the central argument of the proof of Theorem~\ref{thm:alg_cfun_pacbayes}.

\begin{proof}[Proof (of Theorem~\ref{thm:alg_cfun_pacbayes})]
    The bound is trivially satisfied at the beginning of execution, since
    $\epsilon^0\gets1$. Next, let $i>0$, and let
    $C^i_Q$ denote the stochastic classifier $\{g^i_Q(x)^2 \leq
    \eta\}$, where $g^i_Q(x) \sim \mathcal N(0, k(x,x) - k_{D^i}(x)^\top
    (\sigma_0^2 I + K^i) k_{D^i}(x))$, with the $i$ superscripts signifying
    using the dataset accumulated so far at iteration $i$. Let $r^i_Q$ denote
    the risk of $C^i_Q$.
    By Lemma 3.7, we have
        $\forall i\geq 1,\; \mathbb P(r^i_Q > (1-F_1(1))\epsilon^i) \leq \frac{6\delta}{\pi^2 i^2}$.
        By a union bound,
    $ \mathbb P(\exists i,\, r_Q^i > (1-F_1(1))\epsilon^i) \leq \sum_{i\geq 1} \frac{6\delta}{\pi^2 i^2} = \delta$.
    Thus, with probability at least $1-\delta$, every $r_Q^i \leq \epsilon^i$.
    On this event, by Lemma 3.8, we have 
    $
        \forall i \geq 1,\, P_X(\{x: C^i(x) > \eta\}) \leq \frac{r_Q^i}{1-F_1(1)} = \epsilon^i
    $
    as desired. 
\end{proof}

\subsection{Bayesian PAC Analysis: the Polynomial Case}%
\label{sub:bayesian_pac_analysis_the_polynomial_case}

\begin{wrapfigure}{r}{0.5\textwidth}
\begin{minipage}[t]{0.46\textwidth}
    \begin{algorithm}[H]
        \caption{To estimate a support set by a polynomial empirical inverse
        Christoffel function satisfying a Bayesian PAC bound.}\label{alg:cfun_pacbayes_poly}
        \begin{algorithmic}
        \STATE inputs: random variable $X$ with support in $\dom$; Christoffel function order $m$; PAC parameters
        $\epsilon,\delta\in(0,1)$; noise parameter $\sigma_0^2\in\R_{++}$;
        initial sample size $N_0$; batch size $N_b$;
        \STATE $N\gets N_0$
        \STATE $D\gets (x_1,\dotsc,x_{N})\simiid X$
        \STATE $i\gets0$
        \STATE $\epsilon^0\gets 1$
        \WHILE {$\epsilon^i > \epsilon$}
            \STATE $i\gets i + 1$
            \STATE append \\ $(x_{N+1},\dotsc,x_{N+N_b})\simiid X$ to $D$
            \STATE $N\gets N+N_b$
            \STATE define $C:\dom\to\R_{+}$ to be \\
            $C(x)=z_m(x)^\top\Mmat^{-1}z_m(x)$;
            \STATE evaluate $\overline{r}$ as in~\eqref{eq:pacbayes_poly}
            \STATE
            $\epsilon_i \gets \frac{\bar r + \frac{2}{N}\log (\frac{\pi^2i^2}{6
            \delta})}{1-F_1(1)}$, $F_1$ as in \eqref{eq:f1_def}
        \ENDWHILE
        \RETURN $\ind{C(x)\le\eta}$
        \end{algorithmic}
    \end{algorithm}
\end{minipage}
\end{wrapfigure}

With the general kernel case settled, we now consider the polynomial case in
particular. Since the kernel case reduces to the polynomial case by the kernel
$k(x,y)=z_m(x)^\top z_m(y)$, we have in a sense already provided a bound for the
polynomial empirical inverse Christoffel function by means of Bayesian PAC
analysis. However, we can construct a prior and posterior stochastic estimator
for the polynomial case which avoids direct use of the $N\times N$ kernel
Gramian, which can be computationally advantageous. The special prior and
posterior stochastic estimators are
\begin{equation}
    \begin{aligned}
        \label{eq:poly_cfun_prior_family}
        C_{P} &= \{x: (W_P^\top z_m(x))^2 \le \eta\},
        \\
        C_{Q} &= \{x: (W_Q^\top z_m(x))^2 \le \eta\},
    \end{aligned}
\end{equation}
where
$W_P\sim\Norm{0,\sigma_0^{-2} I}$,
$W_Q\sim\Norm{0, \Mmat^{-1}}$.

Notice that $W_P^\top z_m$ and $W_Q^\top z_m$ are Gaussian processes: indeed,
they correspond to
the prior and posterior of a general Gaussian process
regression model with prior kernel $k(x,y)=z_m(x)^\top z_m(y)$, conditioned on
the observations $x_1,\dotsc,x_N$, $y_1=\dotso=y_N=0$ with observation noise
level $\sigma_0^2$.
We take the central concept $\bar{c}_{Q}$ of $C_{Q}$ to be the $\eta$-sublevel set 
\begin{align}
    \bar{c}_{Q} 
    &= \{x: \Ex{(W_Q^\top z_m(x))^2} \le \eta\}
    = \{x:
        z_m(x)^\top \Mmat^{-1} z_m(x)
    \le\eta\},
\end{align}
that is the $\eta$-sublevel set of the polynomial empirical inverse Christoffel
function.
Applying the PAC-Bayes theorem to this construction yields the following
alternative to Lemma~\ref{lem:sriskbound_kernel}.
\begin{lemma}
    \label{lem:sriskbound_poly}
   Let $x_1,\dotsc,x_N\simiid X$ denote a set of observations used to construct
   $C_{Q}$ from $C_{P}$
   in~\eqref{eq:poly_cfun_prior_family}. 
   The stochastic risk $\srisk$
   is bounded by $\overline{r}\in(0,1)$, 
   where
   \begin{equation}
       \label{eq:pacbayes_poly}
       \overline{r} = \sup \left\{\hspace{-2pt} \beta : 
       D_{\text{ber}}(\esrisk || \beta)
       \hspace{-1pt}\le\hspace{-1pt}
       \frac{D_{KL}(\Norm{0,(\sigma_0^{2}I \hspace{-1pt}+\hspace{-1pt} \Mmat)^{-1}}\hspace{-2pt}|| \Norm{0,\sigma_0^{-2}I})
           \hspace{-1pt}+\hspace{-1pt} \log\frac{N+1}{\delta}
   }{N}\hspace{-2pt}\right\}.
   \end{equation}
\end{lemma}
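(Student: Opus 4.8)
The plan is to reprise the argument behind Lemma~\ref{lem:sriskbound_kernel}, but carried out over the finite-dimensional weight space $\mathcal{W}=\R^{\binom{n+m}{n}}$ supplied by the parametrization in~\eqref{eq:poly_cfun_prior_family}, which sidesteps the $N\times N$ Gramian. First I would verify the hypotheses of Theorem~\ref{thm:pacbayes}: the concept class consists of the $\eta$-sublevel sets $\{x:(w^\top z_m(x))^2\le\eta\}$ parametrized by $w\in\mathcal{W}$; the membership loss $\ell(c,x)=\ind{x\notin c}$ is $\{0,1\}$-valued; and the posterior weight law $W_Q=\Norm{0,\Mmat^{-1}}$ is absolutely continuous with respect to the prior weight law $W_P=\Norm{0,\sigma_0^{-2}I}$, since both are nondegenerate Gaussians on the same space (note $\Mmat=\sigma_0^2 I+\tfrac1N ZZ^\top\succ0$, so $\Mmat^{-1}$ exists), and the prior is fixed independently of the data.

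With the hypotheses in hand, I would apply Theorem~\ref{thm:pacbayes} to get that, with $P_X^N$-probability at least $1-\delta$,
\[
D_{ber}(\esrisk\,\|\,\srisk)\ \le\ \frac{D_{KL}(W_Q\|W_P)+\log\frac{N+1}{\delta}}{N}.
\]
Here $\esrisk$ is the computable quantity furnished by Lemma~\ref{lem:empirical_stochastic_risk}, which applies because the polynomial estimators of~\eqref{eq:poly_cfun_prior_family} coincide with the kernel estimators of~\eqref{eq:kernel_prior_posterior_families} for the kernel $k(x,y)=z_m(x)^\top z_m(y)$. The KL term is then made explicit via the closed form for the divergence between centered Gaussians, $D_{KL}(\Norm{0,\Sigma_Q}\|\Norm{0,\Sigma_P})=\tfrac12\bigl(\tr(\Sigma_P^{-1}\Sigma_Q)-\dim\mathcal{W}+\log\det\Sigma_P-\log\det\Sigma_Q\bigr)$, evaluated at the posterior and prior weight covariances; simplifying this expression yields the numerator displayed in~\eqref{eq:pacbayes_poly}.

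It remains to turn the two-sided $D_{ber}$ inequality into the stated one-sided bound. Since $\beta\mapsto D_{ber}(\esrisk\,\|\,\beta)$ is continuous and convex, vanishes at $\beta=\esrisk$, and tends to $+\infty$ as $\beta\uparrow1$, the solution set of $D_{ber}(\esrisk\,\|\,\beta)\le(\cdots)/N$ is a closed subinterval of $[0,1)$ that contains $\esrisk$ and whose right endpoint is precisely $\overline{r}$ from~\eqref{eq:pacbayes_poly}; moreover $\overline{r}\in(0,1)$ because the right-hand side is strictly positive (as $\log\frac{N+1}{\delta}>0$) and finite. On the good event above, $\srisk$ belongs to this interval, hence $\srisk\le\overline{r}$, which is the claimed bound. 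I expect the only nonroutine part to be the KL computation and its algebraic reconciliation with the exact form written in~\eqref{eq:pacbayes_poly} — in particular, tracking which of $\Mmat$, $\sigma_0^2 I+\Mmat$, and $\sigma_0^{-2}I$ is the prior covariance and which is the posterior covariance, and keeping the perturbation in $\Mmat$ consistent with the Bayesian-linear-regression reading of $W_Q^\top z_m$ (prior precision $\sigma_0^2 I$, likelihood precision $\tfrac1N ZZ^\top$, with the noise level inherited from the change of variables $\sigma^2=\sigma_0^2/N$) used elsewhere in the paper; everything else is a direct specialization of the proof of Lemma~\ref{lem:sriskbound_kernel}.
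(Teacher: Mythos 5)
Your proposal follows the same route as the paper's proof: apply the Seeger PAC-Bayes theorem (Theorem~\ref{thm:pacbayes}) to the Gaussian weight-space prior and posterior of~\eqref{eq:poly_cfun_prior_family}, then use convexity of $D_{ber}(\esrisk\,\|\,\beta)$ in $\beta$, its vanishing at $\beta=\esrisk$, and its divergence at the endpoints to conclude that the solution set is an interval whose right endpoint $\overline{r}$ bounds $\srisk$ with confidence $1-\delta$. The KL bookkeeping you flag is real but is not resolved in the paper's proof either (the posterior covariance is written in slightly different forms in~\eqref{eq:poly_cfun_prior_family}, in~\eqref{eq:pacbayes_poly}, and in the appendix proof), so your argument matches the paper's in both substance and level of detail.
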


Using this alternative lemma, we obtain a validation for
Algorithm~\ref{alg:cfun_pacbayes_poly}.
\begin{corollary}
    \label{cor:alg_cfun_pacbayes_poly}
    At each stage $i$ of execution, the empirical inverse Christoffel function
    constructed in Algorithms~\ref{alg:cfun_pacbayes_poly} satisfies the PAC
    bound~\eqref{eq:pac_epsi}.
\end{corollary}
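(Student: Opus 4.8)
The plan is to replay the proof of Theorem~\ref{thm:alg_cfun_pacbayes} almost verbatim, with the kernelized construction~\eqref{eq:kernel_prior_posterior_families} replaced by the weight-space construction~\eqref{eq:poly_cfun_prior_family} and Lemma~\ref{lem:sriskbound_kernel} replaced by Lemma~\ref{lem:sriskbound_poly}. First I would observe that the construction in~\eqref{eq:poly_cfun_prior_family} is self-contained and already supplies everything we need, independently of any Gaussian-process interpretation: the process $g_q := W_Q^\top z_m$ with $W_Q\sim\Norm{0,\Mmat^{-1}}$ is a centered Gaussian process whose variance at $x$ is $z_m(x)^\top\Mmat^{-1}z_m(x)$, which is exactly the polynomial empirical inverse Christoffel function $C(x)$ computed in Algorithm~\ref{alg:cfun_pacbayes_poly}. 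Consequently the posterior central concept $\bar c_Q = \{x:\Ex{g_q(x)^2}\le\eta\}$ equals the $\eta$-sublevel set $\{x: C(x)\le\eta\}$, as stated in the discussion preceding the corollary. (The matrix inversion identity~\eqref{eq:empirical_moment_matrix_mil} is needed only if one also wishes to recognize this construction as the one induced by the kernel $k(x,y)=z_m(x)^\top z_m(y)$, which yields the computational advantage of avoiding the $N\times N$ Gramian but is not required for correctness.)

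Next I would instantiate the chain of bounds of Section~\ref{sub:bayesian_pac_analysis_of_christoffel_function_estimators}. The empirical stochastic risk formula of Lemma~\ref{lem:empirical_stochastic_risk} applies verbatim, since its proof uses only the marginal law $g_q(x_i)\sim\Norm{0,C(x_i)}$, so that $g_q(x_i)^2/C(x_i)\sim\chi^2_1$; thus $\esrisk = \avgion 1-F_1(\eta/C(x_i))$. The stochastic-risk bound $\srisk\le\overline r$ is exactly Lemma~\ref{lem:sriskbound_poly}, which is the PAC-Bayes Theorem~\ref{thm:pacbayes} applied with the finite parametrization $w\in\mathcal W=\R^{\binom{n+m}{n}}$, prior $W_P=\Norm{0,\sigma_0^{-2}I}$ and posterior $W_Q=\Norm{0,\Mmat^{-1}}$ (absolute continuity is automatic, both being nondegenerate Gaussians, $\Mmat$ positive definite because $\sigma_0^2>0$), so that $D_{KL}(W_Q || W_P)$ is the closed-form Gaussian divergence appearing in~\eqref{eq:pacbayes_poly}. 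Finally, the central-concept bound $r(\bar c_Q)\le\frac{1}{1-F_1(1)}\srisk$ of Lemma~\ref{lem:central_concept_risk_bound} carries over unchanged, as its proof uses only the Gaussian law of $g_q$ and the definition of the central concept.

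With these three ingredients in hand, I would finish with the same union-bound argument as in the proof of Theorem~\ref{thm:alg_cfun_pacbayes}, but with Lemma~\ref{lem:sriskbound_poly} in place of Lemma~\ref{lem:sriskbound_kernel}. For each iteration $i$, combining Lemmas~\ref{lem:empirical_stochastic_risk},~\ref{lem:sriskbound_poly}, and~\ref{lem:central_concept_risk_bound} with confidence parameter $6\delta/(\pi^2 i^2)$ in place of $\delta$ yields $\mathbb{P}(r_Q^i > (1-F_1(1))\epsilon^i)\le 6\delta/(\pi^2 i^2)$, where $r_Q^i$ is the risk of the stochastic estimator built from the data accumulated by iteration $i$ and $\epsilon^i$ is the quantity computed in Algorithm~\ref{alg:cfun_pacbayes_poly}. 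A union bound over $i\ge1$, using $\sum_{i\ge1}1/i^2 = \pi^2/6$, shows that with probability at least $1-\delta$ every iterate satisfies $r_Q^i\le(1-F_1(1))\epsilon^i$; on that event Lemma~\ref{lem:central_concept_risk_bound} gives $P_X(\{x:C^i(x)>\eta\})\le\epsilon^i$ simultaneously for all $i$, which is the bound~\eqref{eq:pac_epsi}.

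I do not expect a serious obstacle: structurally the corollary is the polynomial specialization of Theorem~\ref{thm:alg_cfun_pacbayes}, and each lemma in the chain either transfers verbatim (Lemmas~\ref{lem:empirical_stochastic_risk} and~\ref{lem:central_concept_risk_bound}) or has already been proved in the polynomial setting (Lemma~\ref{lem:sriskbound_poly}). The one place warranting care is the first paragraph — pinning down precisely which Gaussian measures $W_P,W_Q$ realize the stochastic estimators~\eqref{eq:poly_cfun_prior_family} with central concept $\{x:C(x)\le\eta\}$, and hence which KL divergence enters~\eqref{eq:pacbayes_poly} — together with the routine bookkeeping of the scaling constants relating $\Mmat$, $K$, and $\sigma_0^2$ through~\eqref{eq:empirical_moment_matrix_mil} if one wants to make the kernel-case reduction explicit.
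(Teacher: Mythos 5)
Your proposal is correct and follows essentially the same route as the paper: the paper's own proof of Corollary~\ref{cor:alg_cfun_pacbayes_poly} simply states that the argument of Theorem~\ref{thm:alg_cfun_pacbayes} is repeated with Lemma~\ref{lem:sriskbound_poly} substituted for Lemma~\ref{lem:sriskbound_kernel}, which is exactly what you do, just with the intermediate steps (central concept identification, empirical stochastic risk, central-concept risk bound, and the $6\delta/(\pi^2 i^2)$ union bound) spelled out explicitly.
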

\begin{proof}
    The argument to verify Algorithm~\ref{alg:cfun_classical} is identical to
    that used in the proof of Theorem~\ref{thm:alg_cfun_pacbayes}, except that
    Lemma~\ref{lem:sriskbound_poly} is used instead of
    Lemma~\ref{lem:sriskbound_kernel}.
\end{proof}

\begin{remark}
    \label{rmk:thresholds}
    Algorithms~\ref{alg:cfun_pacbayes} and~\ref{alg:cfun_pacbayes_poly} require
    that a threshold parameter $\eta$ be selected \emph{a priori} based on the
    kernel.
    For instance, if a squared exponential kernel
    $k(x,y)=\exp(-\|x-y\|^2/(2\ell)^2)$ is used in
    Algorithm~\ref{alg:cfun_pacbayes}, the resulting empirical inverse
    Christoffel function will always have values in $[0,1]$, with values
    generally smaller close to data points: thus choosing a value between $0$
    and $1$ is a suitable choice, with smaller values yielding finer
    approximations of the support set. For
    Algorithm~\ref{alg:cfun_pacbayes_poly}, a reasonable heuristic is to select
    $\eta=\binom{n+2m}{n}/\epsilon$: one can show that the expected
    value of the true inverse Christoffel function of order $m$ is
    $\binom{n+2m}{n}$ when the input is distributed according to $X$, so by
    Markov's inequality the probability mass of the
    $\binom{n+2m}{n}/\epsilon$-level subset of the true inverse Christoffel
    function is at least $1-\epsilon$.
\end{remark}

\subsection{Numerical Considerations for Large Datasets}%
\label{sub:numerical_considerations_for_large_datasets}

As the sample size $N$ grows, the calculations in
Algorithm~\ref{alg:cfun_pacbayes} involving the kernel matrix $K$ can become
computation- and memory-intensive. 
In particular,
evaluating $\kappa^{-1}(x)$
to compute the support set estimate and 
computing the KL divergence
that appears in~\eqref{eq:pacbayes_kernel}
both require the construction of an $N\times N$ matrix
and an $O(N^3)$ matrix inversion.
Computational difficulties related to the size of the $K$ matrix
are well known in the field of kernel machines; in response,
a wealth of approximation techniques have been developed to reduce compute and
memory requirements at the cost of fidelity. These approximation techniques can
be used to improve the efficiency of evaluating the kernelized empirical inverse
Christoffel function and its construction via Algorithm~\ref{alg:cfun_pacbayes}.

\newcommand{\Knys}{\tilde{K}}
\newcommand{\Nnys}{r}
For example, to reduce the speed and memory requirements of evaluating
$\kappa^{-1}(x)$, we can replace the kernel matrix $K$ with its rank-$\Nnys$
Nystr\"om approximation~\cite{williams2001using}.
The Nystr\"om approximation is a method to construct low-rank
approximations of Gramian matrices, such as the kernel matrix $K$,
which has a simple expression in terms of
block submatrices of the original matrix. Specifically,
the rank-$\Nnys$ Nystr\"om approximation of the kernel matrix $K$ has the form
\begin{equation}
    \label{eq:nys_def}
    \Knys = K_{N\Nnys}K_{\Nnys\Nnys}^{-1}K_{N\Nnys},
\end{equation}
where $K_{N\Nnys}\in\R^{N\times \Nnys}$, $K_{\Nnys\Nnys}\in\R^{\Nnys\times \Nnys}$ are submatrices of
$K$ whose $i,j$ elements are $k(x_i,x_j)$. 
Making the substitution $K\mapsto\Knys$ and applying the matrix
inversion lemma to $\kappa^{-1}(x)$ yields
\begin{equation}
    \begin{aligned}
        \tilde{\kappa}^{-1}(x)
        &= k(x,x) - \sigma_0^{-2}\bigg(k_D(x)^\top k_D(x)\\
        &\quad - 
        (K_{N\Nnys}k_D(x))^\top
        (I - K_{N\Nnys }\left(\sigma_0^2 K_{\Nnys \Nnys } + K_{\Nnys N}K_{N\Nnys }\right)^{-1}
    (K_{\Nnys N}k_D(x))\bigg).
    \end{aligned}
\end{equation}
To numerically compute the final expression, 
we need only invert an $\Nnys\times\Nnys$ matrix instead of an $N\times N$ one; indeed,
we do not need to explicitly
construct an $N\times N$ matrix at all.

Next, we consider a method to over-approximate the KL divergence based on the
$\Nnys$ largest eigenvalues of $K$. Since the 
KL divergence $D_{KL}(Z_0||Z_1)$ between $N$-dimensional normal random variables
$Z_0\sim\Norm{\mu_0,\Sigma_0}$ and
$Z_1\sim\Norm{\mu_1,\Sigma_1}$
has the expression
\begin{equation}
    \label{eq:normal_kl_generic} 
    D_{KL}(Z_0||Z_1)
    =
    \tfrac{1}{2}\log\det 
    \Sigma_1\Sigma_0^{-1}
    +
    \tfrac{1}{2}\tr
    \Sigma_1^{-1}\left(
        (\mu_0-\mu_1)(\mu_0-\mu_1)^\top + \Sigma_0
    \right)-\tfrac{N}{2}.
\end{equation}
For $\Sigma_0=(\sigma_0^{-2}I + K^{-1})^{-1}$, $\Sigma_1=K$,
$\mu_0=\mu_1=0$, \eqref{eq:normal_kl_generic} reduces to
\begin{align}
        \tfrac{1}{2}\log\det(I+\sigma_0^{-2} K)
        +
        \tfrac{1}{2}\tr\left((I+\sigma_0^{-2}K)^{-1}\right)
        -\tfrac{N}{2}.
        \label{eq:kldiv_derivation_1}
\end{align} 
Since $\log(1+\sigma_0^{-2}x)$ and $1/(1+\sigma_0^{-2}x)$ are analytic for $x
\ge 0$, we can apply the spectral mapping theorem~\cite[Sec. 4.7]{callier2012linear}
to~\eqref{eq:kldiv_derivation_1} to obtain an expression for the KL divergence
in terms of the 
eigenvalues $\lambda_1,\dotsc,\lambda_N$ of $K$, namely
\begin{align}
        \quad=
        \frac{1}{2}
        \sum_{i=1}^N
        \left(
        \log(1+\sigma_0^{-2}\lambda_i) +
    \frac{1}{1+\sigma_0^{-2}\lambda_i}-1\right).
        \label{eq:kldiv_derivation_2}
\end{align}
Numerically computing the KL divergence with the
expression~\ref{eq:kldiv_derivation_1} requires an explicit construction of the
$K$ matrix, and the inverse of an $N\times N$ matrix: this requires $O(N^3)$
operations and $O(N^2)$ memory.
Using~\eqref{eq:kldiv_derivation_2} instead of~\eqref{eq:kldiv_derivation_1} to
compute the KL divergence with the full set of eigenvalues does not generally
yield an improvement, since computing the eigenvalues of $K$ is also $O(N^3)$.
However, since $K$ is a symmetric positive definite matrix, the eigenvalues are
all positive, and the $m$ largest eigenvalues can be computed in less than
$O(N^3)$ time, for instance by a Lanczos-type algorithm~\cite[ch. 9]{van1996matrix}.
Let $\lambda_p$ denote the $p^{th}$ largest eigenvalue:
Since~\eqref{eq:kldiv_derivation_2} is a nondecreasing function in each 
$\lambda_i$, the approximation $\lambda_i \approx \lambda_p$ for $\lambda_i$ such that $\lambda_i < \lambda_p$ yields an
upper bound on the KL divergence that can be computed in less than $O(N^3)$
time. 

\section{Examples}%
\label{sec:examples}

This section demonstrates how
Algorithms~\ref{alg:cfun_classical},~\ref{alg:cfun_pacbayes},
and~\ref{alg:cfun_pacbayes_poly}
can be used to make
accurate estimates of forward reachable sets. 
These examples were run on Savio, a high-performance computing cluster managed
by the University of California at Berkeley. Specifically, each experiment used
a single \texttt{savio2\_bigmem} node comprising 20
CPUs running at 2.3 GHz and 128 GB of memory.
In all experiments, we use the parameters $\epsilon=0.1$, $\delta=10^{-9}$ for
all three algorithms, and in Algorithm~\ref{alg:cfun_pacbayes}, we use the squared
exponential kernel $k(x,y)=\exp(-\|x-y\|^2/(2\ell)^2)$. The values for $m$
and $\ell$ used in experiments is listed in Table~\ref{tab:times}.
To select thresholds in Algorithms~\ref{alg:cfun_pacbayes}
and~\ref{alg:cfun_pacbayes_poly}, we follow the advice of
Remark~\ref{rmk:thresholds}, using $\eta=0.15$ for
Algorithm~\ref{alg:cfun_pacbayes} and $\eta=\binom{n+2m}{n}/\epsilon$ for
Algorithm~\ref{alg:cfun_pacbayes_poly}.
For Algorithm~\ref{alg:cfun_classical}, we use an initial sample size of $20,000$ and 
a batch size of $5,000$ samples. For Algorithm~\ref{alg:cfun_pacbayes_poly}, we
use an initial sample size and batch size of $1,000$ samples. 

\begin{table}[]
\centering
\begin{tabular}{@{}lcclcclccl@{}}
\toprule
\multirow{2}{*}{Example} 
& \multicolumn{3}{c}{Alg.~\ref{alg:cfun_classical}} 
& \multicolumn{3}{c}{Alg.~\ref{alg:cfun_pacbayes_poly}} 
& \multicolumn{3}{c}{Alg.~\ref{alg:cfun_pacbayes}} \\
\cmidrule(lr){2-4}
\cmidrule(lr){5-7}
\cmidrule(lr){8-10}
& $m$ & time (s) & $N$ & $m$ & time (s) & $N$ & $\ell$ & time (s) & $N$   \\ \midrule
Duffing & 10 & 39  & 70307 & 10 & 13 & 11000 & $1/4$ & 506 & 30000 \\
Quadrotor & 4 & 3  & 14587 & 4 & 4 & 6000 & $1/4$ & 488  & 35000 \\
Traffic &10 & 16 & 70307 & 10 & 11 & 10000 & $1/4$ & 325  & 30000  \\ \bottomrule
\end{tabular}
\caption{Computation times, sample sizes, and Christoffel function parameters for numerical experiments. All times in seconds.
    Algorithms~\ref{alg:cfun_classical} and~\ref{alg:cfun_pacbayes_poly} used
    polynomial order $m$, and Algorithm~\ref{alg:cfun_pacbayes} used
    $k(x,y)=\exp(-\|x-y\|^2/(2\ell)^2)$, with $m$, $\ell$ as given in the
table. All experiments use $\epsilon=0.1$, $\delta=10^{-9}$.}
\label{tab:times}
\end{table}

\subsection{Chaotic Nonlinear Oscillator}%
\label{subsec:chaotic_nonlinear_oscillator}

\begin{wrapfigure}{r}{0.55\textwidth}
    \centering
    \includegraphics[width=\linewidth]{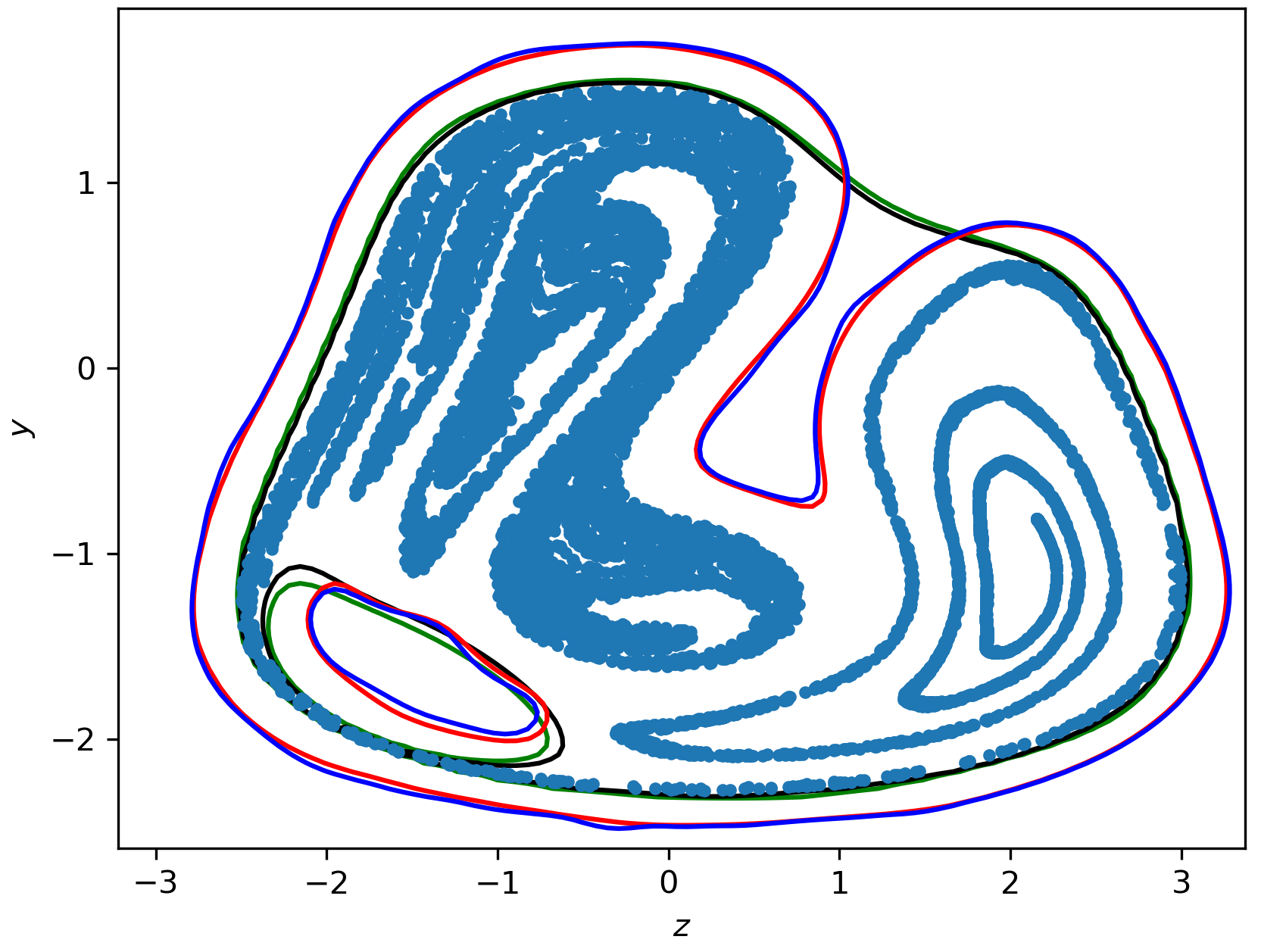}
    \caption{Results of Algorithms~\ref{alg:cfun_classical}, \ref{alg:cfun_pacbayes}
        and~\ref{alg:cfun_pacbayes_poly} on the 
        Duffing oscillator reachability problem. 
    Black contour: output of Algorithm~\ref{alg:cfun_classical}.
    Green contour: output of Algorithm~\ref{alg:cfun_pacbayes_poly}.
    Red contour: output of Algorithm~\ref{alg:cfun_pacbayes}.
    Blue contour: output of Algorithm~\ref{alg:cfun_pacbayes}, over-approximated
    using the Nystr\"om approximation with 1,000 samples.
    Blue dots: samples used in Algorithm~\ref{alg:cfun_pacbayes_poly}.
}%
    \label{fig:duffing}
\end{wrapfigure}

The first example is a reachable set estimation problem for the nonlinear,
time-varying
system with dynamics
$
\dot{z} =  y,
\dot{y} = -\alpha y + z - z^3 + \gamma\cos(\omega t),
$
with states $x=(z,y)\in\R^2$ and parameters $\alpha, \gamma, \omega\in\R$. This system
is known as the \emph{Duffing oscillator}, a nonlinear oscillator which exhibits
chaotic behavior for certain values of $\alpha$, $\gamma$, and $\omega$, for instance
$\alpha = 0.05$, $\gamma = 0.4$, $\omega = 1.3$.
The initial set is the interval such that $z(0)\in[0.95, 1.05]$,
$y(0)\in[-0.05,0.05]$, and we take $X_0$ to be uniform over this interval. The time range is $[t_0,t_1]=[0,100]$.

We use Algorithms~\ref{alg:cfun_classical}
and~\ref{alg:cfun_pacbayes_poly}
to compute  reachable set estimates using an order $k=10$
empirical inverse Christoffel function with accuracy and confidence parameters
$\epsilon=0.10$, $\delta=10^{-9}$.
Additionally, we use Algorithm~\ref{alg:cfun_pacbayes} to compute a kernelized
empirical inverse Christoffel function using the squared exponential kernel
$k(x,y)=\exp(\|x-y\|^2/(2\ell^2))$ with $\ell=0.25$.
Figure~\ref{fig:duffing} shows the reachable set estimate
for the Duffing oscillator system with the problem data given above produced by
all three algorithms: for Algorithm~\ref{alg:cfun_pacbayes}, both the full
kernelized Christoffel function estimator and its Nystr\"om approximation with
$r=2000$.
The cloud of points are the $11,000$ samples used in Algorithm~\ref{alg:cfun_pacbayes_poly}. 
The reachable set
estimate is neither convex nor simply connected, closely following the
boundaries of the cloud of points and excluding an empty region. In particular,
all estimates exhibit a hole in a region of the state space devoid of samples. 

\subsection{Planar Quadrotor}

The next example is a reachable set estimation problem for horizontal position
and altitude in a nonlinear model of the planar dynamics of a quadrotor used as
an example  in~\cite{mitchell2019invariant,bouffard2012board}. 
The dynamics for this model are
$
        \ddot{p_x} = u_1 K\sin(\theta),
        \ddot{p_h} = -g + u_1 L\cos(\theta),
        \ddot{\theta} = -d_0\theta - d_1\dot{\theta} + n_0 u_2, 
$
where $p_x$ and $p_h$ denote the quadrotor's horizontal position and altitude in
meters,
respectively, and $\theta$ denotes its angular displacement (so that the
quadrotor is level with the ground at $\theta=0$) in radians. 
The system has 6
states, which we take to be $x$, $h$, $\theta$, and their first derivatives. The
two system inputs $u_1$ and $u_2$ (treated as disturbances for this example) represent the motor thrust and the desired angle,
respectively. The parameter values used (following~\cite{bouffard2012board}) are $g=9.81$,
$L=0.64$, $d_0=70$, $d_1=17$, and $n_0=55$. The set of initial states is the interval such that 
        $p_x(0)\in[-1.7, 1.7]$,  $\dot{p}_x(0)\in[-0.8, 0.8]$, 
        $p_h(0)\in[0.3, 2.0]$,  $\dot{p}_h(0)\in[-1.0, 1.0]$, 
        $\theta(0)\in[-\pi/12, \pi/12]$,  $\dot{\theta}(0)\in[-\pi/2, \pi/2]$,
the set of inputs is the set of constant functions $u_1(t)=u_1$, $u_2(t)=u_2$ $\forall t\in[t_0,t_1]$, whose values lie in the
interval
        $u_1\in[-1.5+ g/L, 1.5 + g/L],  u_2\in[-\pi/4, \pi/4]$,
and we take $X_0$ and $D$ to be the uniform random variables defined over these intervals.
The time range is $[t_0,t_1]=[0,5]$. We take probabilistic parameters $\epsilon=0.10$, $\delta=10^{-9}$.
Since the goal of this example is to estimate a reachable set for the horizontal
position and altitude only, we are interested in a reachable set for a subset of
the state variables, namely $p_x$ and $p_h$.
Following Remark~\ref{rmk:rs_subset},
we use the reduced-state variations of
Algorithms~\ref{alg:cfun_classical},~\ref{alg:cfun_pacbayes},
to compute reachable set estimates using only data for the $(p_x,p_h)$ states,
effectively reducing the dimension of the problem from 6 to 2.
Figure~\ref{fig:quadrotor} shows the reachable set estimate
for the planar quadrotor system with the problem data given above produced by
all three algorithms and the Nystr\"om-approximated
Algorithm~\ref{alg:cfun_pacbayes} with $r=2000$.
The reachable set estimates displayed in Figure~\ref{fig:quadrotor}, and the
computation times reported in Table~\ref{tab:times}, use the reduced-state
variation.

\begin{wrapfigure}{r}{0.55\textwidth}
    \centering
    \includegraphics[width=\linewidth]{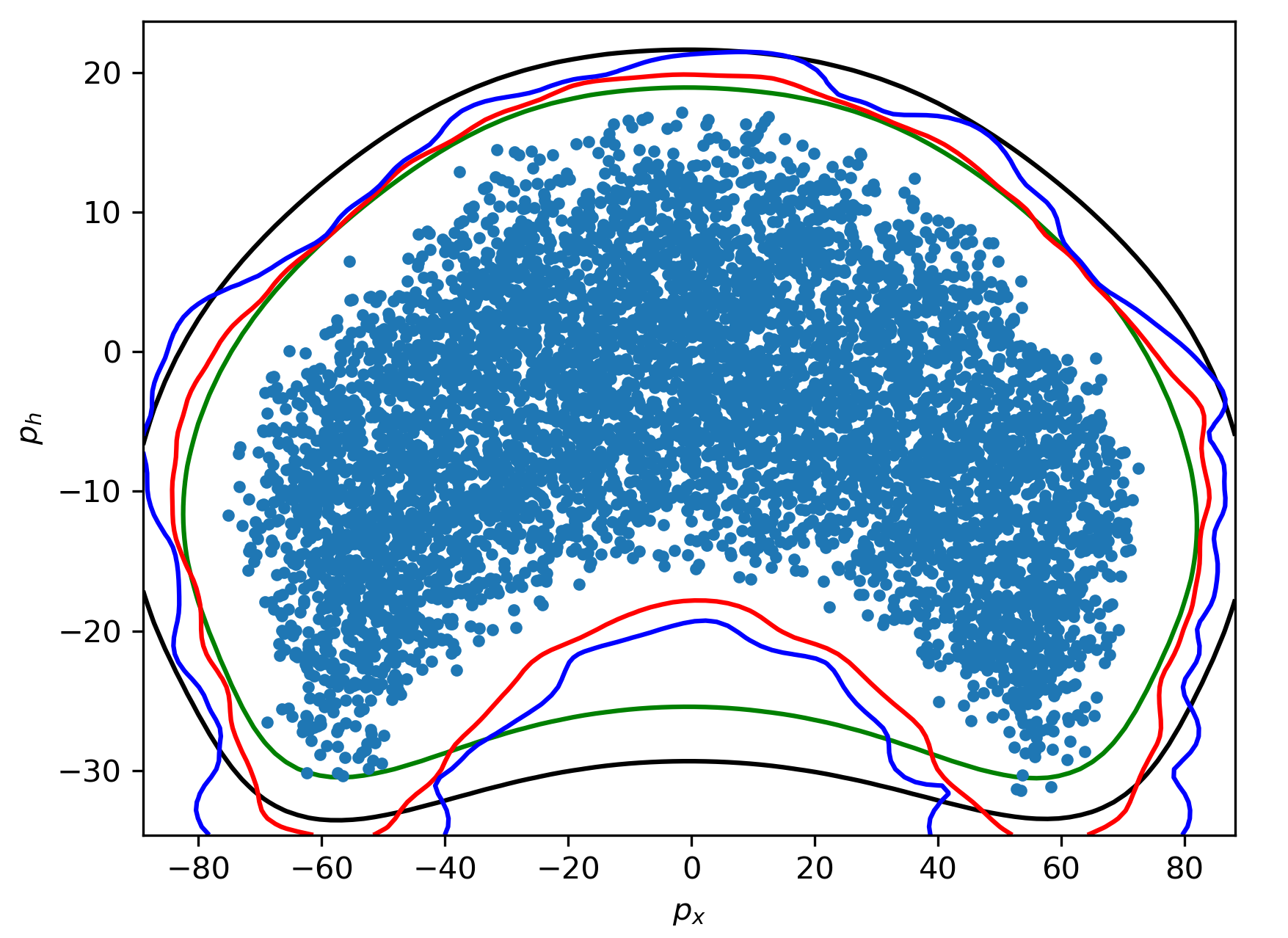}
    \caption{Results of Algorithms~\ref{alg:cfun_classical}, \ref{alg:cfun_pacbayes},
        and~\ref{alg:cfun_pacbayes_poly} on the 
        planar quadrotor reachability problem, restricting the reachability
        analysis to the $(p_x,p_h)$ plane.
    Green contour: polynomial Christoffel
    function of order $k=10$. Blue contour: kernelized inverse Christoffel function
    with squared exponential kernel. Red contour: Nystr\"om approximation
    ($m=10,000$) of the kernelized inverse Christoffel function with squared
    exponential kernel.}%
    \label{fig:quadrotor}
\end{wrapfigure}

\subsection{Monotone Traffic}

This example is a special case of a continuous-time road traffic analysis problem
used as a reachability benchmark in~\cite{coogan2018benchmark}. This problem
investigates the density of traffic on a single lane over a time range over four 
periods of duration $T$ using 
the Cell Transmission Model~\cite{daganzo1994cell} that divides the road into $n$
equal segments. The spatially discretized model
is an $n$-dimensional dynamical system with states $x_1,\dotsc,x_n$, where $x_i$ represents the
density of traffic in the $i^{th}$ segment.
Traffic enters segment through $x_1$ and flows through each successive segment before leaving through
segment $n$.
The system dynamics~\eqref{eq:traffic_dynamics} are monotone, i.e.
order-preserving: this property allows us to compute an interval containing the reachable set by evaluating the dynamics at the extreme points of the intervals defining the initial set and the set of disturbances.
While this interval
over-approximation is easy to compute, and is the best possible over-approximation
by an interval, it is in general a conservative over-approximation because the
reachable set may only occupy a small volume of the interval. Since the
empirical Inverse Christoffel function method can accurately detect the geometry
of the reachable set, we use this method to compare the shape of the reachable
set to the best interval over-approximation. 

The state dynamics are
\begin{equation}
\begin{split}
    \label{eq:traffic_dynamics}
    \dot{x}_1 &= \frac{1}{T}\left(d-\min(c, vx_{1}, w(\overline{x}-x_{2}))\right)\\
    \dot{x}_i &= \frac{1}{T}\big( \min(c, vx_{i-1}, w(\overline{x}-x_{i})) \\
              &- \min(c, vx_{i}, w(\overline{x}-x_{i+1}))\big), \quad(i=2,\dotsc,n-1)\\
    \dot{x}_{n} &= \frac{1}{T}\left(\min(c, vx_{n-1}, w(\overline{x}-x_{n})/\beta) - \min(c, vx_{n}))\right),
\end{split}
\end{equation}
where $v$ represents the free-flow speed of
traffic, $c$ the maximum flow between neighboring
segments, $\bar{x}$ the maximum occupancy of a segment, and $w$ the congestion
wave speed. The input $u$ represents the influx of traffic into the first node.
For the reachable set estimation problem, we use a model with $n=6$ states, and take $T=30$, $v=0.5$, $w=1/6$, and $\bar{x}=320$. The initial set is the
interval such that $x_i(0)\in[100,200]$, $i=1,\dotsc,n$, the set of
disturbances is the set of constant disturbances with values in the range $d\in[40/T, 60/T]$, and $X_0$ and $D$ are the uniform random variables over these sets. The time range is $[t_0, t_1]=[0, 4T]$.

\begin{wrapfigure}{r}{0.55\textwidth}
    \centering 
\includegraphics[width=\linewidth]{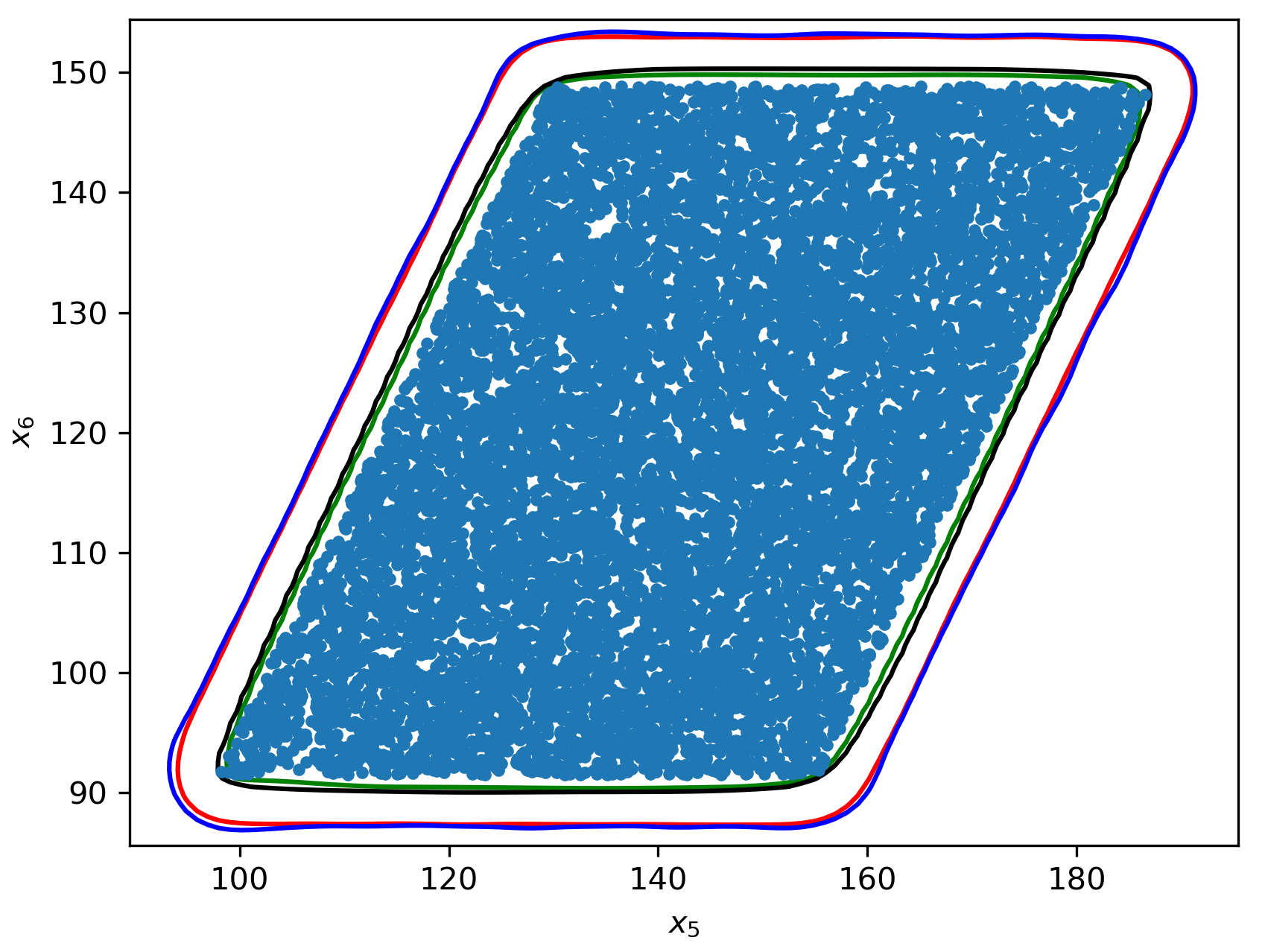}
\caption{Results of Algorithms~\ref{alg:cfun_classical},
    \ref{alg:cfun_pacbayes},
        and~\ref{alg:cfun_pacbayes_poly} on the 
        six-state monotone traffic reachability problem, restricting the reachability
        analysis to the $(x_5,x_6)$ plane.
    Green contour: polynomial Christoffel
    function of order $k=10$. Blue contour: kernelized inverse Christoffel function
    with squared exponential kernel. Red contour: Nystr\"om approximation
    ($m=10,000$) of the kernelized inverse Christoffel function with squared
    exponential kernel.}%
    \label{fig:traffic}
\end{wrapfigure}

We use the reduced-state variant of
Algorithms~\ref{alg:cfun_classical},~\ref{alg:cfun_pacbayes},
and~\ref{alg:cfun_pacbayes_poly}
to compute a reachable set for the traffic densities $x_5$ and $x_6$ at the
end of the road,
using an order $k=10$
empirical inverse Christoffel function with accuracy and confidence parameters
$\epsilon=0.10$, $\delta=10^{-9}$. 
Figure~\ref{fig:quadrotor} 
compares the reachable set estimates for the traffic system 
produced by all three algorithms, and the Nystr\"om-approximated
Algorithm~\ref{alg:cfun_pacbayes} with $r=2000$,
with the projection of the tight interval
over-approximation computed using the monotonicity property of the traffic
system.
The figure indicates that the tight interval over-approximation of the
reachable set is a somewhat conservative over-approximation, since the reachable
set has approximately the shape of a parallelotope whose sides are not axis-aligned.

\section{Conclusion}

This paper advances the non-asymptotic theory of support set estimation by
empirical Christoffel functions by applying the formal connection between
Christoffel functions and Gaussian process regression models to a Bayesian PAC
analysis of the estimator. 
The numerical examples demonstrate that the Bayesian
PAC give a large improvement in sample efficiency over classical PAC bounds.
Additionally, Bayesian PAC arguments endow the kernelized inverse Christoffel
function with PAC bounds, a development not possible
with classical VC dimension bounds.

Improvements to the general theory can advance in step with advances in Bayesian
PAC analysis. For instance, there are new results in theory of
\emph{derandomizing} Bayesian PAC bounds, which could offer sample efficiency
improvements over the argument used in
Lemma~\ref{lem:central_concept_risk_bound} to apply the Bayesian PAC bound to
the central concept. Furthermore, domain-specific knowledge could be applied to
the GP prior used to construct the Christoffel functions. For instance, in
reachability problems and estimate of the system sensitivity matrix could be
used to intelligently select length-scales in the kernel, along with other
algorithm hyper-parameters such as the initial sample size and batch size.

\bibliographystyle{siamplain}
\bibliography{refs}

\appendix

\section{Background on Gaussian Process Models}%
\label{sub:gaussian_process_models}

A Gaussian process $g$ is a stochastic process such that vectors
$(g(x_1),\dotsc,g(x_m))$ of point evaluations are multivariate
Gaussian distributions. Similar to how a Gaussian random variable is completely
characterized by its mean and variance, a Gaussian process is completely
characterized by a mean function $m$, defined pointwise as $m(x)=\Ex{g(x)}$, and
a positive semidefinite covariance function $k$, defined on all pairs of points
$x,y\in\dom$ as $k(x,y)=\Ex{g(x)g(y)}$.

Gaussian processes can also be defined according to a finite set of basis functions,
admitting a direct construction as a finite weighted sum.
For an
$m$-dimensional
space of functions with basis $b_1,\dotsc,b_m:\dom\to\R$,
we form the stochastic weighted
average $\sumio{m} w_i b_i$, where $w=(\liston[m]{w})\sim\Norm{0,\Sigma}$.
This weighted average is a Gaussian process whose support is the span of
$\liston[m]{b}$, with mean $m(x)=0$ and covariance $k(x,y)=\sumio{m}b(x)^\top
\Sigma b(y)$, where $b(\cdot) = (b_1(\cdot),\dotsc,b_m(\cdot))^\top$.

The Gaussian process regression model is Bayesian regression model that uses a
Gaussian process as the prior over regression functions.
In our case, we take the mean of the prior process to be zero.
The data is assumed to be of the form $g(x_i)=h_i+\varepsilon$, where
$\varepsilon$ is a Gaussian noise term with variance $\sigma^2$.
Under these conditions, the posterior for the unknown
function is also a Gaussian process, whose mean and covariance are given by the
formulas
\begin{align}
    \label{eq:gp_kernel}
    m_q(x) &= k_D(x)^\top{\left(\sigma^2 I_N + K\right)}^{-1}h,\\
    k_q(x,y) &= 
    k(x,y) - k_D(x)^\top{\left(\sigma^2 I_N + K\right)}^{-1} k_D(y).
\end{align}
In the finite-dimensional case,
the posterior process 
has mean and covariance functions
\begin{align}
    \label{eq:gp_feature}
    m_q(x) &= \sigma^{-2}b(x)^\top{\left(\Sigma^{-1} + \sigma^2 BB^\top\right)}^{-1}By \\
    k_q(x,y) &= {b(x)}^\top{(\Sigma^{-1} + \sigma^{-2}BB^\top)}^{-1}b(x),
\end{align}
where $B\in\R{m\times N}$ is the matrix formed by evaluating the basis
functions on the data, that is $B=[b(x_1)~~\cdots~~b(x_N)]$.
Taking $b=z_k$, $\Sigma=\sigma_0^{-2} I$, $\sigma = N^{-1/2}$, 
yields the posterior variance
$
    \Var[g_q]{x} = {z_m(x)}^\top{\left(\sigma_0^2I  + \frac{1}{N}\sumion z_m(x_i)z_m(x_i)\top\right)}^{-1}{z_m(x)},
    $
which is precisely the polynomial empirical inverse Christoffel function of
order $k$ for the data $x_1,\dotsc,x_n$ evaluated at the point $x$.

\section{Proofs of Some Results in
Section~\ref{sub:bayesian_pac_analysis_of_christoffel_function_estimators}}
\quad

\begin{proof}[Proof of Lemma~\ref{lem:empirical_stochastic_risk}]
    We consider the kernel case, since the polynomial case follows by the
    appropriate choice of kernel function.
    Recall that $\kappa^{-1}(x)$ is the variance of $g_p$ by construction.
    Evaluating $g_p$ at a single point $x$ yields the normal random variable
    $g_p(x)\sim\Norm{0, \kappa^{-1}(x)}$.
    It follows that $g_p(x)/\sqrt{\kappa^{-1}(x)}~\sim\Norm{0,1}$, and that
    $g_p(x)^2/\kappa^{-1}(x)\sim\chi^2_1$, that is that
    $g_p(x)^2/\kappa^{-1}(x)$, is a chi-square random variable with one degree
    of freedom. The average loss over $C_P$ for a fixed point $x$ is then
    \begin{equation}
       \begin{aligned}
           \label{eq:a}
           \Ex{\ell(C_Q,x)}&=\Ex{\ind{x\in C_Q}}\\
                           &= 1-\Pr{g_p(x)^2 \le \eta} 
                            = 1-\Pr{\frac{g_p(x)^2}{\kappa^{-1}(x)} \le \frac{\eta}{\kappa^{-1}(x)}}\\
                           &= 1 - F_1\left(\frac{\eta}{\kappa^{-1}(x)}\right).
       \end{aligned} 
    \end{equation}
    Averaging this expression over the data points yields~\eqref{eq:poly_empirical_stochastic_risk}.
\end{proof}

\begin{proof}[Proof of Lemma~\ref{lem:sriskbound_poly}]
    We apply the Seeger PAC-Bayes Theorem~\ref{thm:pacbayes} to the prior and posterior
    measures $P$ and $Q$ induced by $C_{P}$ and $C_{Q}$ as
    defined in~\eqref{eq:poly_cfun_prior_family}.
    Recall that these prior and posterior measures are defined by the random
    vectors
    $W_P\sim\Norm{0, \sigma_0^{-2}I}$,
    $W_Q\sim\Norm{0, (\sigma_0^{-2}I + \Mmat)^{-1}}$,
    which act as parameters.
    Applying this choice of $W_p$ and $W_q$ to equation~\eqref{eq:pacbayes}
    of Theorem~\ref{thm:pacbayes} yields the inequality
   \begin{equation}
   P_X^N\left(\left\{ x_1,\dotsc,x_N : 
   D_{\text{ber}}(\esrisk || \srisk)
   \le \gamma\right\}\right) \ge 1-\delta, 
   \end{equation}
   where 
   $\gamma=\tfrac{1}{N}(D_{KL}(\Norm{0,(\sigma_0^{2}I + \Mmat)^{-1}} || \Norm{0,\sigma_0^{-2}I}) + \log\frac{N+1}{\delta})$.
    Suppose the data set $x_1,\dotsc,x_N$ is one such that the inner inequality
    $ D_{\text{ber}}(\esrisk || \srisk) \le\gamma $ 
    holds: then $\srisk$, the true stochastic risk, lies in the set
    $\{\beta : D_{ber}(\esrisk||\beta) \le\gamma\}$.
    The function $D_{ber}(\esrisk || \beta)$ is convex in $\beta$ and covers
    the range $[0,\infty)$, attaining $0$ for $\beta=\esrisk$ and approaching
    $\infty$ for $\beta\to 0$ and $\beta\to 1$.
    By these properties,
    $\{\beta : D_{ber}(\esrisk||\beta) \le\gamma\}$
    is a closed convex subset of $(0,1)$ for any positive $\gamma$.
    As such, it attains a supremum,
    meaning that $\overline{r}$ as defined in~\eqref{eq:pacbayes_poly} is
    well-defined. Thus we have, with confidence $1-\delta$, that $\overline{r}$
    is an upper bound on the stochastic risk $\srisk$.
\end{proof}

\begin{proof}[Proof of Lemma~\ref{lem:sriskbound_kernel}]
    As in the proof of Lemma~\ref{lem:sriskbound_poly}
    we apply the Seeger PAC-Bayes Theorem~\ref{thm:pacbayes}, 
    this time to the prior and posterior
    measures $P$ and $Q$ induced by $C_{P}$ and $C_{Q}$ as
    defined in~\eqref{eq:kernel_prior_posterior_families}.
    These measures are defined by the Gaussian processes
    $g_p$ and $g_q$ which act as the concept class parameters
    $W_P$ and $W_Q$ respectively in the statement of Theorem~\ref{thm:pacbayes}. 
    To compute the KL divergence between $W_P$ and $W_Q$, we use another result
    due to Seeger, described in Section 2.2 of~\cite{seeger2002pac}, which states that
    the KL divergence between a prior Gaussian process $g_p$ and the posterior
    Gaussian processes $g_q$ obtained after conditioning on data
    $x_1,\dotsc,x_N$ is equal to the KL divergence between the restriction of
    the two Gaussian processes to the data points, that is the KL divergence
    between the multivariate normal random vectors
    $(g_p(x_1),\dotsc,g_p(x_N))$
    and
    $(g_q(x_1),\dotsc,g_q(x_N))$.
    The mean and covariance of these random variables are simply the restrictions
    of the mean and covariance functions of their defining processes to
    $(x_1,\dotsc,x_N)$. Both random vectors have mean zero. The covariance
    matrix of the prior random vector
    $(g_p(x_1),\dotsc,g_p(x_N))$ is $K_p(X,X)=K(X,X)$ as discussed in
    Section~\ref{sub:gaussian_process_models}. By~\eqref{eq:gp_kernel} and an application of the matrix inversion lemma,
    the covariance of the posterior random vector
    $(g_q(x_1),\dotsc,g_q(x_N))$ is
    \begin{equation}
        \begin{aligned}
            K_q(X,X) 
            &= 
            K(X,X) - K(X,X)\left(\sigma_0^2 I + K(X,X)\right)^{-1}K(X,X)\\
            &=
            \left(K(X,X)^{-1} + \sigma_0^{-2} I\right)^{-1}.
        \end{aligned}
    \end{equation} 
\end{proof}

\begin{proof}[Proof of Lemma~\ref{lem:central_concept_risk_bound}]
    Consider a point $x\in\dom$ outside of the central concept, that is such
    that $\bar{c}_\eta(x) = \Ex{(g(x)^2} > \eta$. The probability that $W_Q^\top z_m(x)$ also
    exceeds $\eta$ is bounded as
    \begin{align}
        \Pr{(g(x)^2 \ge \eta}
        &\le
        \Pr{(g(x)^2 \ge \Ex{(g(x)^2}}
        =\Pr{\frac{(g(x)^2}{\Ex{(g(x)^2}} > 1}
        =1-F_1(1).
    \end{align}
    Next, let us consider the risk of the stochastic estimator, that is
    $\srisk=\Pr{(g(X)^2 > \eta}$. Applying the law of total probability
    with respect to the random variable $X$, we divide $\srisk$ into two
    integrals according to whether the central concept exceeds $\eta$:
    \begin{align}
        \Pr{(g(X)^2 > \eta}
        &=
        \int_\dom \Pr{(g(x)^2 > \eta} dP_x(x)\\
        &=
        \int_\dom
        \Pr{(g(x)^2 > \eta}
        \ind{\Ex{(g(x)^2} > \eta}
        dP_x(x)\\
        &+
        \int_\dom
        \Pr{(g(x)^2 > \eta}
        \ind{\Ex{(g(x)^2} \le \eta}
        dP_x(x).
    \end{align}
    We have that
    $
        \Pr{(g(X)^2 > \eta}
        \ge
        \int_\dom
        \Pr{(g(x)^2 > \eta}
        \ind{\Ex{(g(x)^2}~>~\eta}
        dP_x(x),
        $
    since all three integrands are nonnegative.
    To find an upper bound on this
    probability in terms of the empirical classifier, we combine the two
    inequalities above to find
    \begin{align}
        \Pr{(g(X)^2 > \eta}
        &=
        \int_\dom \Pr{(g(x)^2 > \eta} dP_x(x)\\
        &\ge
        \int_\dom \Pr{(g(x)^2 > \eta}
             \ind{\Ex{(g(x)^2} > \eta} dP_x(x)\\
        &\ge
        (1-F_1(1))
        \int_\dom \ind{\Ex{(g(x)^2} > \eta} dP_x(x)\\
        &= (1-F_1(1))\Pr{\Ex{(g(x)^2} > \eta}
        = (1-F_1(1)) r(\hat{c}_\eta),
    \end{align}
    which we rearrange to yield
    $
        r(\bar{c}_\eta) \le \frac{1}{1-F_1(1)}\srisk[Q_\eta].
    $
\end{proof}

\end{document}